\newcommand{\bigO}{\mathcal{O}}
\newtheorem{definition}{Definition}[section]
\newtheorem{theorem}{Theorem}[section]
\newtheorem{lemma}{Lemma}[section]
\newtheorem{proposition}{Proposition}[section]
\newtheorem{corollary}{Corollary}[section]
\newtheorem{example}{Example}[section]
\newtheorem{remark}{Remark}[section]
\title{Big data searching using words}
\author{Santanu Acharjee$^1$ and  Ripunjoy Choudhury$^2$\\
$^{1,2}$Department of Mathematics\\
Gauhati University\\
Guwahati-781014, Assam, India.\\
e-mails: $^1$sacharjee326@gmail.com, $^2$ripunjoy07@gmail.com}
\date{}
\begin{document}
\onehalfspacing
\maketitle
\section*{Abstract}
Big data analytics is one of the most promising areas of new research and development in computer science, enterprises, e-commerce, and defense. For many organizations, big data is considered one of their most important strategic assets. This explosive growth has made it necessary to develop effective techniques for examining and analyzing big data from mathematical perspectives. Among various methods of analyzing big data, topological data analysis (TDA) is now considered one of the useful tools. However, there is no fundamental concept related to the topological structure in big data. In this paper, we present fundamental concepts related to the neighborhood structures of words in big data search, laying the groundwork for developing topological frameworks for big data in the future. We also introduce the notion of big data primal within the context of big data search and explore how neighborhood structures, combined with the Jaccard similarity coefficient, can be utilized to detect anomalies in search behavior.\\

\noindent
{\bf Keywords:} Big data, neighborhood, search space, graph, anomaly, computational complexity.\\
\noindent
{\bf 2020 AMS Classifications:} 68P05; 68P10; 94A16; 54A99.\\

\section{Introduction}
In the last few decades of the previous century, the world witnessed significant advancements in industry and technology. The emergence of two new fields, computer science and information science, during this period has significantly contributed to the rapid advancement of technology and business. New developments in the fields of data analysis and data gathering have begun concurrently with these two expansions. Data collection and analysis have long been foundational to data-based research. Traditionally, such approaches have relied on statistical methods to interpret data and derive relevant insights. The mid-1990s marked a turning point with the emergence of the Internet and the World Wide Web, fundamentally transforming the data landscape. Computer science was increasingly integrated with traditional manual methods to facilitate data generation from diverse sources. As a result, the digital revolution accelerated the evolution of data creation and collection methods at an unprecedented pace. According to \cite{1}, global data production, capture, copying, and consumption were expected to increase rapidly, reaching a projected total of 64.2 zettabytes in 2020. It was further anticipated that, after five additional years of growth, the volume of data created worldwide would surpass 180 zettabytes by 2025 \cite{1}. In 2020, the amount of data generated and duplicated reached a record high \cite{1}. This surge was largely driven by the increased demand caused by the COVID-19 pandemic, as more individuals worked and studied from home and relied more heavily on home entertainment options \cite{1}. \\

\noindent
Recently, big data analytics has created numerous opportunities for researchers in mathematics and computer science, market analysts, and decision-makers in multinational corporations \cite{2}. Its significance is also evident in the defense sector \cite{3}. In the field of biology, big data plays a crucial role in DNA-based research \cite{4}. The applications of deep learning \cite{5}, transfer learning \cite{6}, reinforcement learning \cite{5}, and other related approaches in big data analytics are rapidly expanding to enhance decision-making capabilities while managing large datasets. Globally, various sectors—such as healthcare, agriculture, tourism, energy markets, and stock markets—have greatly benefited from the integration of machine learning and big data technologies. In Europe, the applications of these technologies have become increasingly significant, particularly in energy markets \cite{7}, real estate markets \cite{8}, and other key industries. The European tourism industry has also experienced notable advancements through the adoption of machine learning and big data technologies \cite{9}. In Asia, researchers have widely applied these technologies across various domains. For instance, neural networks and machine learning models have been employed to predict the prices of agricultural commodities in Southeast Asia \cite{10, 11}. A series of recent studies by Jin and Xu in China \cite{12, 13, 14, 15, 16, 17, 18} illustrate the effective use of neural networks and machine learning techniques to forecast the prices of key commodities such as green gram, crude oil, carbon emission allowances, and regional steel in North China. These studies have significantly influenced our own research direction. In summary, big data analytics is poised to serve as a transformative force across a wide array of domains in the 21st century.\\

\noindent
  Topological Data Analysis (TDA) is clearly and accessibly explained for a general audience by Knudson \cite{19}. Among the various approaches to big data analysis, Topological Data Analysis (TDA) holds particular significance. Traditional statistical techniques, such as regression analysis, are most effective when applied to dispersed data \cite{3}. However, in scenarios where data points are distributed along geometric shapes in two-dimensional space, regression analysis becomes inadequate due to its reliance on a linear regression line \cite{3}. This limitation becomes even more pronounced when data is dispersed in higher-dimensional spaces, making it challenging to detect underlying geometric structures using conventional methods. To address such challenges in big data analysis, the application of Algebraic Topology, a branch of mathematics, becomes essential \cite{20}. \\

\noindent
Regarding the theoretical underpinnings of big data analytics, Coveney et al. \cite{21} emphasized the urgent need to develop robust theoretical frameworks, cautioning that the impact of big data would be diminished without such foundations. A similar viewpoint was expressed by Succi and Coveney \cite{22}, who argued for alternative theoretical models to support big data analytics.\\

\noindent
In response to these concerns, recent research has increasingly explored the application of Topological Data Analysis (TDA) as a novel lens for interpreting complex datasets. Notably, studies such as \cite{23, 24, 25} have examined the relationship between data and topology through Euclidean space $\mathbf{E^n}$. Offroy and Duponchel \cite{26} applied TDA to address challenges in biological, analytical, and physical chemistry data. Later, Sná\v{s}el et al. \cite{20} conducted a comprehensive survey in 2017, highlighting geometrical and topological approaches to big data and expressing optimism about the future applicability of topology in data analysis. More recently, Boyd et al. \cite{27} explored the use of TDA in geoscience, demonstrating that it offers a more nuanced and effective approach to analyzing high-dimensional data compared to traditional clustering techniques, and positioning it as a powerful tool for advancing quantitative research in geoscience education.\\

\noindent
Collectively, these developments suggest that the use of TDA for big data analysis is rapidly expanding. Just as every lock is manufactured to fit a specific key—designed with a structure that precisely accommodates only that key—TDA offers specialized analytical structures capable of unlocking complex patterns in high-dimensional data. This analogy raises important questions from the perspective of big data analytics:\\

\begin{enumerate}
    \item \textit{What are the hidden topological features in big data?}
    \item \textit{Can we establish generalized topological foundations for big data searching and big data analytics?} 
\end{enumerate}

\noindent
Although Topological Data Analysis (TDA) is grounded in the principles of algebraic topology, as noted in \cite{20}, there is a lack of foundational concepts directly addressing the hidden topological structures within big data analytics—apart from a few analytical procedures involving TDA, regression analysis, and related methods. Recently, Acharjee \cite{3} proposed preliminary ideas connecting topology and big data through a secret sharing scheme in the defense sector.\\

\noindent
Motivated by the arguments of Coveney et al. \cite{21} and Succi and Coveney \cite{22} regarding the need for alternative theoretical foundations in big data analytics, this study aims to establish a novel relationship between words based on their search spaces. Building on this idea, we introduce a topology-based big data search framework. The study explores several mathematically significant results arising from this system and highlights their novelty. Additionally, it examines graph-based structures representing word relationships in the context of big data search and proposes an anomaly detection method, complete with Python implementation and a relevant case study. Recognizing the role of data proximity, the study further introduces a modified primal structure aimed at improving search efficiency in big data environments.\\

\noindent
The structure of this article is organized to guide the reader through a logical progression of ideas and results. Section 2 offers a review of prior work, highlighting key contributions, identifying existing gaps, and presenting the innovations proposed in this study. Section 3 introduces essential definitions and foundational results that form the basis for the subsequent analysis. Section 4 presents the concept of word relationships derived from search spaces, along with significant theoretical findings. In Section 5, we develop the notion of neighborhood structures of words and introduce several new results. Sections 6 and 7 focus on the construction of a word graph for big data search and an anomaly detection method, respectively. Section 8 proposes a modified primal structure aimed at improving the efficiency and accuracy of big data search mechanisms. Section 9 evaluates the relevance and implications of the definitions and results introduced, discussing both the strengths and limitations of the proposed methodologies. In section 10, the computational complexity of the proposed structure is calculated. The comparison of our proposed method with the TF-IDF and cosine similarity methods is done in section 11. Moreover, limitations on empirical and quantitative validations are discussed in section 12. Section 13 consists of discussions related to our findings. Finally, the conclusion is added in section 14.
\noindent
\section{A brief overview of previous studies:}
The term ``Big Data" was first introduced in 1997 by Cox and Ellsworth of NASA \cite{28}. In their work, they mentioned, ``{\it visualization provides an interesting challenge for computer systems: data sets are generally quite large, taxing the capacities of main memory, local disk, and even remote disk. We call this the problem of big data.}” \cite{28}. Later, Chen and Zhang \cite{29} offered a comprehensive definition of big data, describing it as datasets that are difficult to collect, store, filter, exchange, analyze, and visualize without the support of modern technologies. In essence, big data refers to highly complex datasets, and Big Data Analytics is the discipline focused on the large-scale processing and interpretation of such data.\\

\noindent
Big data is often characterized by five distinct features, commonly referred to as the ``5 V’s of big data'': volume, value, velocity, variety, and veracity. Acharjee \cite{3} highlighted these five dimensions, emphasizing their role in making big data increasingly dynamic. Over time, additional ``V’s'' have been proposed to capture emerging characteristics, as noted in \cite{3,30}. Readers may refer to \cite{3,30} for a detailed explanation of each ``V.''\\

\noindent
In terms of mathematical foundations, Sun and Wang \cite{31} were among the first to explore theoretical aspects of big data search, and their work is regarded as a breakthrough in connecting mathematics with big data search problems. Despite their contribution, relatively few studies have since addressed the development of mathematical models for big data. Motivated by this gap, the present paper seeks to contribute novel mathematical concepts related to big data search, building on the ideas introduced in \cite{31}.\\

\noindent
Historically, big data search relied on traditional keyword-based indexing techniques, such as TF-IDF (Term Frequency–Inverse Document Frequency) \cite{32}, and vector-based similarity measures like cosine similarity \cite{33}, which focus on identifying exact or closely matching terms across documents. More recently, deep learning models such as Word2Vec \cite{34} and BERT \cite{35} have enabled more advanced semantic representation by embedding words into high-dimensional vector spaces, where semantically related words are placed closer together.\\

\noindent
In parallel, Topological Data Analysis (TDA) has demonstrated significant potential in fields such as biology and image processing, where it helps uncover hidden structures in complex datasets. However, the integration of TDA into text-based big data search remains largely unexplored. Most applications of TDA have been concentrated on clustering or pattern detection in non-textual domains, with minimal research focused on leveraging topological structures derived from textual data or word relationships. This presents a significant opportunity for advancing the theoretical and practical integration of TDA in the domain of big data search.\\

\noindent
Despite significant progress in big data analytics, several critical research gaps remain unaddressed. Traditional big data search methods often overlook topological concepts, which have the potential to reveal deeper relational structures within datasets. While Topological Data Analysis (TDA) has garnered increasing attention for its efficacy in analyzing complex data, there remains a noticeable absence of structured methodologies that apply topological frameworks specifically to word-based data search. Furthermore, current anomaly detection techniques are predominantly designed for numerical datasets and rely heavily on statistical or vector-space models. These methods fall short when applied to textual data, as they do not effectively capture anomalies that arise from contextual and topological relationships among words. Additionally, prior research has not introduced foundational structures or neighborhood models aimed at enhancing the structural representation of text within large-scale data environments. Addressing these limitations is essential for advancing the integration of topology into big data text analytics.\\

\noindent
In response to the identified research gaps, this study proposes several novel concepts and techniques. Central among them is the introduction of a neighborhood structure for words, offering a foundational framework to represent the proximity and relational dynamics between terms. Unlike conventional methods based on simple co-occurrence or semantic similarity, this approach establishes topological relationships that enable more meaningful clustering and retrieval of related words. Furthermore, by integrating the Jaccard similarity coefficient into this topological context, the study presents a robust technique for detecting anomalies and irregular patterns in word-based searches. This method advances existing anomaly detection models by focusing on the relative positioning and interconnections of words within the neighborhood structure, rather than relying solely on statistical deviations. Additionally, the paper introduces the concept of a big data primal structure, which decomposes large datasets into more manageable and interpretable components. This structure %not only improves computational efficiency but also 
serves as a foundational model designed to enhance the structural representation of textual data in big data environments.\\

\section{Preliminaries:} In this section, we adopt definitions from existing literature that will be used in the following sections.\\
\begin{definition}\cite{36}
 Let $\mathbf{R}$ be a binary relation on $U$, namely, $\mathbf{R}$ is a subset of
the Cartesian product $U\times U$. When $(x,y)\in \mathbf{R}$, we also write $x\mathbf{R}y$.  
\end{definition}
\begin{definition}\cite{36}
    The relation $\mathbf{R}$ is referred to as serial if for all $x\in U$ there exists $y\in U$ such that $x\mathbf{R}y$. 
\end{definition}

\begin{definition} \cite{37}
 The relation $\mathbf{R}$ is preorder if and only if $\mathbf{R}$ is reflexive and transitive. \end{definition}
 
In this section, we recall several definitions and key results from \cite{37} regarding topologies generated by a relation on a set.\\

\begin{definition}\cite{37}
            If $\mathbf{R}$ is a relation on $X$, then the afterset of $x\in X$ is $x\mathbf{R}$, where
$\mathbf{xR} = \{\,y : x\mathbf{R}y\,\}$ and the forset of $x\in X$ is $\mathbf{R}x$, where $\mathbf{R}x = \{\,y : y\mathbf{R}x\,\}$.
        \end{definition}
      \begin{example}
          Let $X=\{\,a,b,c,d\,\}$ and $\mathbf{R}$ be a relation on $X$ such that $\mathbf{R}=\{\,(a,a),(a,b),(c,a),(d,a)\,\}.$ Then, forset and afterset  of $a$ are $\mathbf{R}a=\{\,c,d,a\,\}$, and $a\mathbf{R}=\{\,b,a\,\}$ respectively.
      \end{example}
 \begin{proposition}\cite{37}
If $\mathbf{R}$ is a relation on $X$, then the class $S_{1} =\{\,x\mathbf{R}:x\in X\,\}(resp. S_{2} =
\{\,\mathbf{R}x : x \in X\,\})$ is a subbase for the topology $\tau_{1}(resp. \tau_{2})$ on $X$.
    \end{proposition}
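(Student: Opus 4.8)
The plan is to invoke the standard characterization that a family of subsets of $X$ is a subbase for a topology precisely when the collection of all finite intersections of its members constitutes a base for that topology. Accordingly, I would first form the family
$$\mathcal{B}_{1}=\left\{\,\bigcap_{i=1}^{n} x_{i}\mathbf{R} : n\in\mathbb{N},\ x_{1},\dots,x_{n}\in X\,\right\}\cup\{X\},$$
consisting of all finite intersections of aftersets together with $X$ itself, obtained as the empty intersection. The goal is then to show that $\mathcal{B}_{1}$ is a base for a topology $\tau_{1}$ on $X$; once that is established, $S_{1}$ is a subbase for $\tau_{1}$ by definition, since every member of $\mathcal{B}_{1}$ is a finite intersection of members of $S_{1}$.

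Next I would verify the two defining axioms of a base. The covering axiom, $\bigcup_{B\in\mathcal{B}_{1}} B = X$, holds at once because $X\in\mathcal{B}_{1}$ by the empty-intersection convention. The local intersection axiom is where the finite-intersection structure does the work: if $B_{1},B_{2}\in\mathcal{B}_{1}$ and $x\in B_{1}\cap B_{2}$, then $B_{1}\cap B_{2}$ is itself a finite intersection of aftersets, hence again a member of $\mathcal{B}_{1}$, so one may simply take $B_{3}=B_{1}\cap B_{2}$. Thus both base axioms hold, and $\mathcal{B}_{1}$ generates the topology $\tau_{1}$ whose open sets are exactly the arbitrary unions of members of $\mathcal{B}_{1}$.

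Finally, I would confirm that $\tau_{1}$ so defined is genuinely a topology: $\emptyset$ arises as the empty union and $X\in\mathcal{B}_{1}\subseteq\tau_{1}$; closure under arbitrary unions is immediate; and closure under finite intersections follows from the local intersection axiom just checked. The argument for $S_{2}=\{\mathbf{R}x:x\in X\}$ and its topology $\tau_{2}$ is identical, with forsets replacing aftersets throughout. The only point requiring care — which I would flag as the main subtlety rather than a genuine obstacle — is the covering condition: the aftersets $x\mathbf{R}$ need not cover $X$, since $\bigcup_{x\in X} x\mathbf{R}$ is merely the range of $\mathbf{R}$. It is therefore essential to adopt the convention that the empty intersection equals $X$, so as to guarantee $X\in\tau_{1}$ and hence that $\tau_{1}$ is a topology on all of $X$.
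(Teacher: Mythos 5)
Your proof is correct, and there is nothing in the paper to compare it against: the proposition is stated in the Preliminaries without proof, being quoted from reference \cite{19}, and your argument is exactly the standard one from that literature --- finite intersections of aftersets, with $X$ adjoined as the empty intersection, form a base whose arbitrary unions constitute $\tau_{1}$ (and symmetrically for the forsets and $\tau_{2}$). You also correctly isolate the one genuine subtlety, namely that $\bigcup_{x\in X}x\mathbf{R}$ is merely the range of $\mathbf{R}$ and need not cover $X$, so the empty-intersection convention (equivalently, defining $\tau_{1}$ as the smallest topology containing $S_{1}$) is indispensable for the covering condition.
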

  \begin{definition}\cite{37}
     If $\tau$ is a topology in a finite set $X$ and the class $\tau^{c}=\{\,G^{c}\mid G\in \tau\,\}$ is also a topology on $X$, then $\tau^{c}$ is the dual of $\tau$.
 \end{definition}

 For the  first time, Sun and Wang \cite{31} introduced mathematical ideas for  big data search. Later, Sun \cite{38} extended some ideas of \cite{31}. Here, we restate their definitions as follows: Let $u\in U$ be a document on the Web. Then $u$ can be a Microsoft Word file in .docx or a report in PDF. Let $v$ be an attribute value. Then, $v$ may be a word such as `big', `data',` analytics', `intelligence', etc. \\
 
\begin{definition}\cite{38}
A search function, denoted as $S: V \rightarrow U$, is defined as
$S(v)=u$ if $v\in u$.
For example, if we use Google to search `analytics', denoted as $v$, then we search a file on business analytics services, denoted as $u$ including $v$.
\end{definition}
\begin{theorem} \cite{31}
The search results, with regard to semantic union, `$\vee$'  in the finite universe of big data is \par $S(v_{1} \vee v_{2} \vee v_{3} \vee v_{4} \vee...\vee v_{n})$= $S(v_{1}) \cap S(v_{2})\cap S( v_{3})\cap S( v_{4})\cap...\cap S(v_{n})$.
\end{theorem}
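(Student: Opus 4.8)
The plan is to peel the statement back to the defining semantics of the operation $\vee$ on attribute values and then verify it by a short chain of logical equivalences. First I would make explicit that, for the intersections on the right-hand side to make sense, the search function must be read as returning the \emph{set} of documents realizing a value, i.e. $S(v)=\{\,u\in U : v\in u\,\}$. The conceptual hinge is that the so-called semantic union $v_{1}\vee v_{2}\vee\cdots\vee v_{n}$ behaves as a conjunctive query: a document contains the compound value exactly when it contains every constituent word at once. I would isolate this as the membership rule
$$\bigl(v_{1}\vee v_{2}\vee\cdots\vee v_{n}\bigr)\in u \iff v_{i}\in u \ \text{ for every } i\in\{\,1,\dots,n\,\},$$
since everything else is bookkeeping once this is granted.

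Given the membership rule, the theorem follows by a direct double-implication argument run on an arbitrary document $u$. Unwinding the definition of $S$, we have $u\in S(v_{1}\vee\cdots\vee v_{n})$ iff $(v_{1}\vee\cdots\vee v_{n})\in u$; by the membership rule this is equivalent to $v_{i}\in u$ for every $i$, hence to $u\in S(v_{i})$ for every $i$, hence to $u\in\bigcap_{i=1}^{n}S(v_{i})=S(v_{1})\cap S(v_{2})\cap\cdots\cap S(v_{n})$. Since $u$ was arbitrary, the two sets coincide. If one prefers to treat $\vee$ as a binary operation, the same conclusion comes by induction on $n$: the base case $n=2$ is the single step $u\in S(v_{1}\vee v_{2})\iff v_{1}\in u \text{ and } v_{2}\in u\iff u\in S(v_{1})\cap S(v_{2})$, and the inductive step writes $v_{1}\vee\cdots\vee v_{n}=(v_{1}\vee\cdots\vee v_{n-1})\vee v_{n}$, applies the base case to the two pieces, and then invokes the induction hypothesis on the first $n-1$ words.

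The main obstacle is not the set-theoretic manipulation but justifying the membership rule together with the associativity of $\vee$ on which the inductive form leans. This forces one to pin down the intended meaning of ``semantic union'': one must argue that adjoining a further word to a query can only shrink, never enlarge, the collection of admissible documents, so that stacking words via $\vee$ corresponds to intersecting result sets. Once $\vee$ is confirmed to be associative and its containment in a document is shown equivalent to joint containment of its constituents, the theorem is immediate, and the initially counterintuitive pairing of a \emph{union} of words with an \emph{intersection} of search results is fully explained.
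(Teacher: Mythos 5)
Your argument is correct, but there is nothing in this paper to compare it against: the statement is Theorem 2.1 in the Preliminaries, quoted from Sun and Wang \cite{20} and stated without proof, since the authors only recall it as background for their search-space results in Section 3. Your reconstruction is the natural one. Two points are worth noting. First, you correctly spot and repair a genuine looseness in the paper's Definition 2.8, which literally declares $S:V\rightarrow U$ with $S(v)=u$ if $v\in u$ (a single document), whereas the theorem's right-hand side only makes sense if $S(v)=\{\,u\in U : v\in u\,\}$; the rest of the paper (e.g.\ $S(\text{big data})\subseteq S(\text{big})$) silently adopts exactly the set-valued reading you make explicit. Second, your ``membership rule'' $(v_{1}\vee\cdots\vee v_{n})\in u \iff v_{i}\in u$ for all $i$ is not something to be proved within this framework --- in the source it is in effect the \emph{definition} of semantic union (searching the compound query ``big data'' means requiring every constituent word to occur in the document), so your honest flagging of it as the one step needing justification is exactly right: granted that reading, your double-inclusion argument, or equivalently your induction via associativity of $\vee$, settles the identity immediately, and it also explains the superficially paradoxical pairing of a semantic \emph{union} of words with an \emph{intersection} of result sets, which is the real content of the theorem.
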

\begin{theorem} \cite{31}
The search results, with regard to semantic intersection, `$\wedge$'  in the finite universe of big data is \par $S(v_{1} \wedge v_{2} \wedge v_{3}\wedge  v_{4}\wedge...\wedge v_{n})$=$S(v_{1}) \cup S( v_{2})\cup S( v_{3})\cup S( v_{4})\cup...\cup S(v_{n})$
\end{theorem}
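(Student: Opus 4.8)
The plan is to reduce the claimed identity to the defining property of the semantic intersection operator and then to the definition of the search function $S$. First I would make explicit what it means for a document to be returned by the compound query $v_{1}\wedge v_{2}\wedge\cdots\wedge v_{n}$: under the semantics of \cite{20}, a document $u$ matches the semantic intersection precisely when it contains \emph{at least one} of the attribute values $v_{1},\dots,v_{n}$. I would record this as the working definition $u\in S(v_{1}\wedge\cdots\wedge v_{n})\iff \exists\, i\in\{1,\dots,n\}\text{ with } v_{i}\in u$, since this is the only point at which the meaning of the operator `$\wedge$' actually enters the argument. Note that this reading is the exact dual of the one underlying the semantic union theorem, where matching $v_{1}\vee\cdots\vee v_{n}$ requires containing \emph{all} of the $v_{i}$.

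Next I would establish the set equality by a chain of biconditionals (equivalently, by double inclusion). Fixing an arbitrary $u\in U$, the working definition gives $u\in S(v_{1}\wedge\cdots\wedge v_{n})$ iff some index $i$ satisfies $v_{i}\in u$; by the definition of the search function, $S(v)=u$ if $v\in u$, the condition $v_{i}\in u$ is equivalent to $u\in S(v_{i})$; and the statement ``there exists $i$ with $u\in S(v_{i})$'' is exactly $u\in\bigcup_{i=1}^{n}S(v_{i})$. Chaining these equivalences yields $u\in S(v_{1}\wedge\cdots\wedge v_{n})\iff u\in\bigcup_{i=1}^{n}S(v_{i})$, and since $u$ was arbitrary the two sets coincide. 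The finiteness of the universe of big data guarantees that the union involved is finite and well defined.

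Alternatively, I would package the same reasoning as an induction on $n$, taking $n=2$ as the base case $S(v_{1}\wedge v_{2})=S(v_{1})\cup S(v_{2})$, which is immediate from the definition, and using associativity of the semantic intersection to write $S(v_{1}\wedge\cdots\wedge v_{n})=S\bigl((v_{1}\wedge\cdots\wedge v_{n-1})\wedge v_{n}\bigr)$, so that the base case together with the inductive hypothesis gives $\bigl(\bigcup_{i=1}^{n-1}S(v_{i})\bigr)\cup S(v_{n})=\bigcup_{i=1}^{n}S(v_{i})$. A third route, if a complementation on attribute values with $S(\neg v)=U\setminus S(v)$ were available, would be to derive the statement from the preceding semantic union theorem purely by taking complements in $U$, exhibiting it as a De Morgan dual.

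\textbf{Main obstacle.} The genuine content, as opposed to routine bookkeeping, lies entirely in pinning down the working definition in the first paragraph: the theorem is only as meaningful as the agreed reading of `$\wedge$', and the whole proof rests on identifying the semantic intersection of words with the disjunction (``or'') of the containment conditions $v_{i}\in u$. Once that identification is fixed, every remaining step is a mechanical translation between existential quantifiers and unions, so I expect no analytic difficulty beyond stating the semantics unambiguously.
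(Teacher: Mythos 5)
The paper offers no proof of this statement: it is recalled verbatim from \cite{20} in the Preliminaries, so there is no internal argument to compare yours against. Judged on its own terms, your proof is correct, and you have located the only genuine content accurately: once the semantics of `$\wedge$' is fixed so that $u$ matches $v_{1}\wedge\cdots\wedge v_{n}$ exactly when $v_{i}\in u$ for some $i$ --- the reading intended in \cite{20}, and indeed the precise dual of the ``contains all of the $v_{i}$'' reading that makes the companion theorem $S(v_{1}\vee\cdots\vee v_{n})=S(v_{1})\cap\cdots\cap S(v_{n})$ true --- the identity reduces to the observation that an existential quantifier over a finite index set is a finite union. Your chain of biconditionals, the induction on $n$ via associativity, and the De Morgan derivation from the semantic-union theorem are all sound ways to package this.

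One caution worth making explicit rather than silent: the paper's Definition 2.5 literally types the search function as $S:V\rightarrow U$ with $S(v)=u$, i.e.\ document-valued. For the displayed identity (and your proof) to be well formed, $S(v)$ must be read as the \emph{set} $\{\,u\in U\mid v\in u\,\}$ of all documents containing $v$, which is how both \cite{20} and the present paper actually use the notation (e.g.\ in writing $S(\text{big data})\subseteq S(\text{big})$). You adopt this set-valued reading implicitly in the step ``$v_{i}\in u$ iff $u\in S(v_{i})$''; stating it up front would close the only loose end in an otherwise complete argument.
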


For the first time, Acharjee et al. \cite{39}, gave the definition of primal on a non-empty set. Here, the definition of primal is given below. \\
\begin{definition}\cite{39}
    Let $X$ be a non-empty set. A collection $\mathcal{P}\subseteq 2^{X}$ is called primal on $X$ if it satisfies the following conditions:\\ (i) $X\notin \mathcal{P},$\\ (ii) if $A\in \mathcal{P}$ and $B\subseteq A$, then $B\in \mathcal{P}$,\\ (iii) if $A\cap B\in\mathcal{P}$, then $A\in \mathcal{P}$ or $B\in \mathcal{P}.$
    \end{definition}

\section{ Big data search function through relation:}
In the study of mathematical ideas for big data searching, to start with the study of the neighborhood structure of words, selecting an appropriate data representation approach is vital for effectively capturing and analyzing the relationships between data points. The data representation method should reflect the inherent connections and dependencies within the data set, enabling the identification of neighbors and facilitating subsequent analysis.\\

In our day-to-day lives, we search for various words on Google, Facebook, YouTube, etc. Here, for each word, we search and we get the results in some patterns. For example, if we search words viz., `big data' and `big' in Google, then we get almost 3,42,00,00,000 and 6,30,00,00,000 results, respectively, in some particular patterns (retrieved on 07.09.2023), as shown in figures 1 and 2, respectively. Here, we can notice that all documents in the search space of `big data' are included in the search space of the word `big', i.e., $S(big$ $ data)\subseteq S(big).$ \\
\begin{figure}[H]
   \centering \includegraphics[width= 6.5 in]{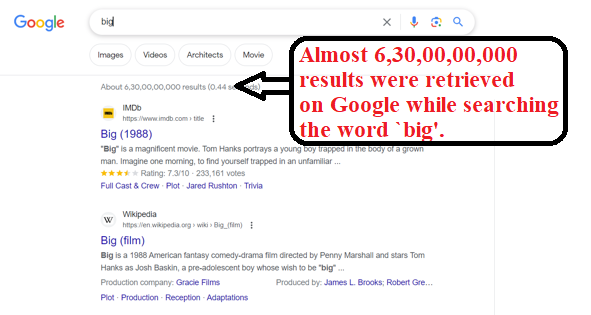 }
         \caption{ Search result for `big' on Google displaying the volume of data retrieved.}
         \label{fig: Search result for `big' on Google displaying the volume of data retrieved.}
          \end{figure}

\begin{figure}[H]
   \centering
         \includegraphics[width=6.5in]{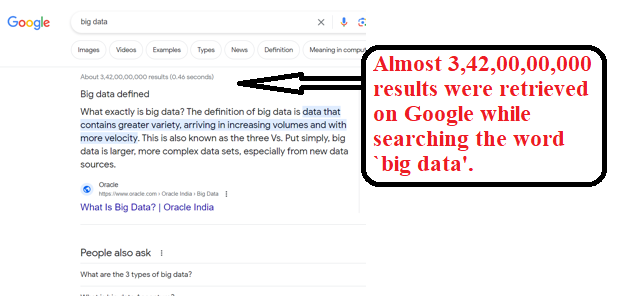 }
         \caption{Search result for `big data' on Google displaying the volume of data retrieved.}
         \label{fig:Search result for `big data' on Google displaying the volume of data retrieved.}
    \end{figure}

Such scenarios motivate the development of new mathematical concepts tailored to the challenges of big data and its search processes. We describe these concepts below:\\

     \begin{definition}
     Let $B$ be the universe of big data, $W$ be the set of words, and $S$ be a search function. Now, for any two words $x,y\in W$, we define a relation $\mathbf{R}$ on $W$ such that $x\mathbf{R}y\iff S(y)\subseteq S(x)$. We can write it as $\mathbf{R}=\{\,(x,y)\mid S(y)\subseteq S(x)$ and $ x,y\in W\,\}$. Here, $S(x)\subseteq B $ $  \forall x\in W.$
   \end{definition}  

Now, we notice that if we search `big' and `big data' in a search engine, say Google, then $S(big$ $data)\subseteq S(big)$, but $S(big)\nsubseteq S(big$ $data)$. So, there may be some ordered property between the words `big' and `big data' in the context of big data searching. Thus, we discuss the following:

     \begin{theorem}
    The relation $\mathbf{R}$ on $W$ such that, for any $x,y\in W$, $x\mathbf{R}y\iff S(y)\subseteq S(x)$ is a preorder relation.
      \end{theorem}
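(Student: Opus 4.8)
The plan is to unpack the definition of a preorder given in Definition 2.3 and verify its two constituent properties directly from the defining equivalence $x\mathbf{R}y \iff S(y) \subseteq S(x)$. Since a preorder is exactly a relation that is reflexive and transitive, it suffices to check these two conditions for $\mathbf{R}$, and the entire argument reduces to the corresponding elementary properties of the set-inclusion relation $\subseteq$ on the power set $2^{B}$.

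First I would establish reflexivity. For an arbitrary word $x \in W$, the search space $S(x)$ is a subset of $B$, and trivially $S(x) \subseteq S(x)$; by the defining condition this is precisely the statement $x\mathbf{R}x$. Hence $x\mathbf{R}x$ holds for every $x \in W$, so $\mathbf{R}$ is reflexive. Next I would establish transitivity. Suppose $x\mathbf{R}y$ and $y\mathbf{R}z$ for words $x,y,z \in W$. Translating through the definition, these say $S(y) \subseteq S(x)$ and $S(z) \subseteq S(y)$. Chaining these inclusions via the transitivity of $\subseteq$ gives $S(z) \subseteq S(x)$, which is exactly $x\mathbf{R}z$. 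Therefore $\mathbf{R}$ is transitive, and combining the two steps shows $\mathbf{R}$ is a preorder.

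Conceptually, the cleanest way to see this is that $\mathbf{R}$ is simply the pullback along $S$ of the inclusion order on $2^{B}$: one word relates to another precisely when their images under $S$ are nested. Since $\subseteq$ is reflexive and transitive on any collection of sets, any relation defined as such a pullback automatically inherits both properties, so no genuine computation is required.

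There is no real obstacle here; the only point demanding care is bookkeeping on the \emph{direction} of the inclusion. The definition deliberately reverses the order, setting $x\mathbf{R}y$ when $S(y) \subseteq S(x)$ rather than the reverse, so I would be careful to apply the inclusions in the correct sense when chaining them for transitivity. It is also worth flagging, though not needed for the proof, that antisymmetry is \emph{not} asserted: two distinct words could in principle have identical search spaces, so $\mathbf{R}$ is genuinely a preorder and need not be a partial order.
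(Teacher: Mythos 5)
Your proof is correct and follows essentially the same route as the paper's: reflexivity from $S(x)\subseteq S(x)$ and transitivity by chaining the two inclusions $S(z)\subseteq S(y)\subseteq S(x)$. Your added observation that $\mathbf{R}$ is the pullback of set inclusion along $S$ is a nice conceptual gloss, but it does not change the substance of the argument.
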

      \begin{proof}
          \textit{Reflexivity:} \\ Let for any $x\in W$ we have $S(x)\subseteq S(x).$  So, $(x,x)\in$ $\mathbf{R}$ $\forall x\in W$.\\ \textit{Transitivity:}\\ Let for any $x, y, z\in W$, such that $(x,y)\in \mathbf{R}$ and $(y,z)\in \mathbf{R}$. Then, by definition 4.1, we have $S(y)\subseteq S(x)$, and $S(z)\subseteq S(y)$. Then, $S(z)\subseteq S(y)\subseteq S(x)$. We can say that $S(z)\subseteq S(x)$. This implies that $(x,z)\in \mathbf{R}.$ Thus, $\mathbf{R}$ is a preorder relation.
\end{proof}
 \begin{remark}
          Since the relation $\mathbf{R}$ on $W$ such that $x\mathbf{R}y\iff S(y)\subseteq S(x)$ is preorder, so we call $(B,W,S,\mathbf{R})$ as preorder big data system (POBDS). From now onwards, in this paper $(B,W,S,\mathbf{R})$ will be known as a preorder big data system.
      \end{remark}
        \section{ Neighborhood systems induced by relation:}

Sierpiński first introduced the notion of neighborhood systems for studying Fréchet (V) spaces \cite{40}. It was developed from the idea of the geometric notion of nearness \cite{40}. Neighborhood structure analysis in big data can provide significant advantages and insights in various fields. Neighborhood structures in big data may refer to those data sets where data points are related to each other through some suitable relations.\\

From the previous section, it is clear that for any two words $x, y$ in $W$, either $S(x)\subseteq S(y)$ or $S(y)\subseteq S(x).$ For example, $S(World$ $bank)\subseteq S(World)$, but $S(World)\nsubseteq S(World$ $bank).$ Thus, in the following part, we have defined some notions of forneighborhood and afterneighborhood of words in big data search:
\begin{definition}
     In $(B,W,S,\mathbf{R}),$ forneighborhood and afterneighborhood of a word $x\in W$ are defined as $\mathbf{R}x=\{\,y\mid S(x)\subseteq S(y)\,\}$ and $x\mathbf{R}=\{\,y\mid S(y)\subseteq S(x)\,\}$ respectively. Moreover, topologies generated by $\{\,\mathbf{R}x\mid x\in X\,\}$ and $ \{\,x\mathbf{R}\mid x\in X\,\}$ as subbases are denoted as $\tau_{F}$ and $\tau_{B}$ respectively.
 \end{definition}
 \noindent

    \begin{lemma}
        In $(B,W,S,\mathbf{R}),$ let $\mathbf{R}$ be a relation on $W$ such that $x\mathbf{R}y\iff S(y)\subseteq S(x).$ For any $x,y\in W $, we have
        \begin{enumerate}[(i)]
            \item  $D=\cup_{x\in D}x\mathbf{R}$  if $D\in \tau_{B}$,
            \item $D^{/}=\cup_{x\in D^{/}}\mathbf{R}x$  if $D^{/}\in \tau_{F}.$
        \end{enumerate}
         \end{lemma}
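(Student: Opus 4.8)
The plan is to prove (i) directly and then obtain (ii) by the dual argument, since the preorder $\mathbf{R}$ plays exactly the same structural role for the aftersets $x\mathbf{R}$ (subbase of $\tau_{B}$) as for the forsets $\mathbf{R}x$ (subbase of $\tau_{F}$). For (i) I would establish the two inclusions separately. The inclusion $D \subseteq \bigcup_{x\in D} x\mathbf{R}$ is the easy half: by Theorem 3.1 the relation $\mathbf{R}$ is reflexive, so $x\mathbf{R}x$ holds, i.e. $x \in x\mathbf{R}$ for every $x$; hence each point of $D$ lies in its own afterset and the union covers $D$.

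The reverse inclusion $\bigcup_{x\in D} x\mathbf{R} \subseteq D$ is the substantive part, and it reduces to showing that for every $x \in D$ the entire afterset $x\mathbf{R}$ is contained in $D$. Here I would unfold the definition of the subbase-generated topology: since $D \in \tau_{B}$, it is a union of finite intersections of subbasic sets, so $D = \bigcup_{\beta} \bigcap_{i} x_{\beta,i}\mathbf{R}$. Fix $x \in D$; then $x$ lies in one of these basic members, say $x \in \bigcap_{i} x_{\beta,i}\mathbf{R}$, which means $x_{\beta,i}\mathbf{R}x$ for every $i$. Now take an arbitrary $w \in x\mathbf{R}$, i.e. $x\mathbf{R}w$; transitivity (Theorem 3.1) applied to $x_{\beta,i}\mathbf{R}x$ and $x\mathbf{R}w$ yields $x_{\beta,i}\mathbf{R}w$ for each $i$, so $w \in \bigcap_{i} x_{\beta,i}\mathbf{R} \subseteq D$. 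Thus $x\mathbf{R}\subseteq D$, which gives the inclusion and hence the equality in (i).

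For (ii) I would run the identical scheme with $\mathbf{R}x$ in place of $x\mathbf{R}$ and $\tau_{F}$ in place of $\tau_{B}$: reflexivity again gives $x \in \mathbf{R}x$ for the easy inclusion, and for a point $x$ in a basic $\tau_{F}$-set $\bigcap_{i} \mathbf{R}x_{\beta,i}$ one has $x\mathbf{R}x_{\beta,i}$ for all $i$; chaining this with $w\mathbf{R}x$ for $w \in \mathbf{R}x$ via transitivity places $w$ in each $\mathbf{R}x_{\beta,i}$, hence in $D^{/}$, so $\mathbf{R}x \subseteq D^{/}$ and equality follows.

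The step I expect to be the main obstacle is the reverse inclusion in (i). The delicate point is that an element of $\tau_{B}$ is only guaranteed to be a union of \emph{finite intersections} of subbasic aftersets, not a union of single aftersets, so the transitivity argument must be carried out simultaneously across every factor of such a finite intersection. The conceptual fact being exploited is that, for a preorder, each $x\mathbf{R}$ is the smallest open set containing $x$, so the subbase is in fact a base; verifying this minimal-neighborhood property is precisely where transitivity is indispensable, and it is the heart of the lemma.
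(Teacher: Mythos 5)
Your proof is correct and takes essentially the same route as the paper's: reflexivity of $\mathbf{R}$ gives $D\subseteq\bigcup_{x\in D}x\mathbf{R}$, and transitivity shows each afterset $x\mathbf{R}$ is the smallest open set containing $x$, which forces $x\mathbf{R}\subseteq D$ for every $x\in D$. If anything, your write-up is more careful than the paper's, which establishes $y\mathbf{R}\subseteq x\mathbf{R}$ for a single subbasic afterset and then appeals to the minimal-neighborhood property without explicitly unfolding $D$ into unions of \emph{finite intersections} of subbasic sets as you do.
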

\begin{proof} \begin{enumerate}[(i)]
    \item Let $D\in \tau_{B}$ . To show that $D=\cup_{x\in D}x\mathbf{R}.$ Let $z\in D$. Then, we have $S(z)\subseteq S(z).$ So,  $z\in z\mathbf{R}$. Hence, $z\in\cup_{x\in D} x\mathbf{R} $. Therefore, $D\subseteq \cup_{x\in D}x\mathbf{R}$.\par Conversely, let $y\in\cup_{x\in D} x\mathbf{R}$. Then, there exists $x\in D$ such that $y\in x\mathbf{R}$. So, by definition 5.1, we have $S(y)\subseteq S(x). $ Clearly, $y\in y\mathbf{R}$ since $S(y)\subseteq S(y)$. Again, if $w\in y\mathbf{R}$, then we have $S(w)\subseteq S(y)$. So, $S(w)\subseteq S(y)\subseteq S(x)$. Hence, $S(w)\subseteq S(x)$. Thus, $w\in x\mathbf{R}. $ Therefore, $y\mathbf{R}$ is the smallest open set containing $y$. So, $y\in y\mathbf{R}\subseteq D. $ Hence, $ \cup_{x\in D}x\mathbf{R}\subseteq D$. Hence, $D=\cup_{x\in D}x\mathbf{R}$.
\item The proof of this part can be obtained by using a similar process.
\end{enumerate}
\end{proof}
The topology $\tau_{F}$  facilitates forward search. Here, we start with a general word first and then start exploring broader neighborhoods of that word. For example, when a user searches for $`Big$ $Data$ $Analytics$' on Google, and if he is not satisfied with the findings, then the user will search forward by expanding the search space to $`Big$ $Data$'. So, from figure 4, we can see that $S(Big$ $Data $ $ Analytics)\subseteq S(Big $ $Data)$,  and thus $Big$ $Data\in \mathbf{R}(Big$ $Data$ $Analytics)$, i.e., forneighborhood of $Big$ $Data$ $Analytics$ contains the word $Big$ $data$ and thus $\tau_{F}$ is formed.   In contrast, the topology $\tau_{B}$ helps us identify specific or narrower contexts that are directly related to the searched word. For example, from figure 4, we can see that when a user searches for the word $`Big$' on Google, then he will get a broader view. So,  in order to be more specific, he may search for $`Big$ $ Movie$'. Thus, $S(Big $ $ Movie)\subseteq S(Big),$. Hence, we get $`Big$ $ Movie$'$\in Big \mathbf{R}$. Therefore, we can see that in the same data set, the $\tau_{F}$ helps to include generalization of context, while the $\tau_{B}$ helps to specialize the context of the searched word by narrowing down the search space. Hence, they are dual in nature. This duality serves as a tool in our method, allowing the user to choose between a broad, exploratory search or a more precise search within big data.\\
 \begin{theorem}
   In $(B,W,S,\mathbf{R}),$ let $\mathbf{R}$ be a relation on $W$ such that $x\mathbf{R}y \iff S(y)\subseteq S(x)$, for any $x,y \in W$. Then, the topologies $\tau_{F}$ and $\tau_{B}$ are dual to each other.\end{theorem}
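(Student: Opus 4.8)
The plan is to prove that $\tau_B^{c}=\tau_F$ (and symmetrically $\tau_F^{c}=\tau_B$), which by Definition 2.5 is precisely the assertion that the two topologies are dual to each other; we work throughout under the finiteness hypothesis on $W$ that Definition 2.5 requires. The engine of the whole argument is a symmetry immediate from the neighborhood definitions (Definition 3.2): for any $x,y\in W$ one has $y\in x\mathbf{R}\iff S(y)\subseteq S(x)\iff x\in\mathbf{R}y$. This single equivalence is exactly what allows complementation to interchange afterneighborhoods and forneighborhoods.

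First I would upgrade Lemma 3.1 into a two-sided characterization of openness. Since each subbasic set $x\mathbf{R}$ already lies in $\tau_B$, any union of such sets is open; combined with Lemma 3.1(i) this yields that $G\in\tau_B$ if and only if $x\in G$ implies $x\mathbf{R}\subseteq G$. The identical reasoning applied to Lemma 3.1(ii) gives that $H\in\tau_F$ if and only if $x\in H$ implies $\mathbf{R}x\subseteq H$. Informally, the members of $\tau_B$ are exactly the sets closed under passing to afterneighborhoods, and the members of $\tau_F$ are exactly the sets closed under passing to forneighborhoods.

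With these characterizations in hand, the core step is to show $G\in\tau_B\Rightarrow G^{c}\in\tau_F$. Fix $x\in G^{c}$ and take any $y\in\mathbf{R}x$; the symmetry above gives $x\in y\mathbf{R}$. If $y$ belonged to $G$, then closure of $G$ under afterneighborhoods would force $x\in y\mathbf{R}\subseteq G$, contradicting $x\in G^{c}$. Hence $y\in G^{c}$, so $\mathbf{R}x\subseteq G^{c}$ and therefore $G^{c}\in\tau_F$, establishing $\tau_B^{c}\subseteq\tau_F$. The reverse inclusion $\tau_F^{c}\subseteq\tau_B$ follows by the mirror-image argument, swapping the roles of $x\mathbf{R}$ and $\mathbf{R}x$; complementing that inclusion gives $\tau_F\subseteq\tau_B^{c}$, and together the two inclusions yield the equality $\tau_B^{c}=\tau_F$.

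I expect the only real subtlety to be bookkeeping rather than depth: one must genuinely verify both inclusions, since a priori $\tau_B^{c}$ is merely a family of sets, and it is the equality with $\tau_F$ (already known to be a topology) that simultaneously certifies $\tau_B^{c}$ is a topology and identifies it as the dual. The finiteness of $W$ is invoked only to make sense of Definition 2.5; the Alexandrov-type structure furnished by Lemma 3.1, in which $x\mathbf{R}$ and $\mathbf{R}x$ are minimal open neighborhoods, means arbitrary unions cause no difficulty, so no further hypotheses on $B$, $W$, or $S$ are needed.
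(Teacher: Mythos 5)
Your proposal is correct and takes essentially the same route as the paper: both arguments are element-chasing with the minimal open neighborhoods $x\mathbf{R}$ and $\mathbf{R}x$ supplied by Lemma 3.1, using the symmetry $y\in x\mathbf{R}\iff S(y)\subseteq S(x)\iff x\in\mathbf{R}y$ to show that the complement of an open set in one topology is a union of the other topology's subbasic neighborhoods, i.e., $\tau_{B}^{c}=\tau_{F}$. If anything you are slightly more careful than the paper, which verifies only $D\in\tau_{F}\Rightarrow D^{c}\in\tau_{B}$ in detail and dispatches the reverse direction with ``similarly,'' whereas you make explicit that both inclusions are needed before $\tau_{B}^{c}$ can be identified with the topology $\tau_{F}$.
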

\begin{proof}
    Let $D\in\tau_{F}$. To show that $D^{c}\in \tau_{B}$, i.e., $D^{c}=\cup_{x\in D^{c}}x\mathbf{R}.$ Let $y\in D^{c}.$ Then, $y\notin D=\cup_{x\in D}\mathbf{R}x$. Thus $\forall x\in D$, we have $y\notin \mathbf{R}x$. Hence, $S(x)\nsubseteq S(y).$ So, there exists $z\in D^{c}$ such that $S(y)\subseteq S(z)$. Therefore, $y\in\cup_{x\in D^{c}}x\mathbf{R}$ and hence, $D^{c}\subseteq\cup_{x\in D^{c}}x\mathbf{R}$.\par  Let $w\in \cup_{x\in D^{c}}x\mathbf{R}$. Then, there exists $x\in D^{c}$ such that $w\in x\mathbf{R}$. Thus, $S(w)\subseteq S(x)$. Again, as $x\in D^{c}$ implies $x\notin D=\cup_{z\in D}\mathbf{R}z$, and so for all $z\in D$ such that $S(z)\nsubseteq S(x).$ But as $S(w)\subseteq S(x)$, so we have $w\notin D.$ Thus, $w\in D^{c}$ and so, $\cup_{x\in D^{c}}x\mathbf{R}\subseteq D^{c}$. Hence, $D^{c}=\cup_{x\in D^{c}}x\mathbf{R}$. Similarly, we can prove for the other part in the case of $\tau_{B}.$
\end{proof}

If we search words like \textit{`bang bang'} on Google, then clearly the phrase \textit{`bang bang'} belongs to both the forneighborhood and afterneighborhood of the word \textit{`bang'}. Thus, if $A$ is a collection of all such words, then clearly $A\in\tau_{B}\cap \tau_{F}$. The following theorem is based on this concept:\\
\begin{theorem}
  In $(B,W,S,\mathbf{R})$, let $\mathbf{R}$ be a relation on $W$ such that $\mathbf{R}=\{\,(x,y)\mid S(y)\subseteq S(x)\,\}$. If $A\in \tau_{B}\cap \tau_{F}$, then $ A\subseteq\cup_{x,z\in A}(x\mathbf{R}\cap \mathbf{R}z).$
\end{theorem}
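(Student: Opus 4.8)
The plan is to prove the asserted containment pointwise: I would fix an arbitrary $w\in A$ and exhibit indices $x,z\in A$ for which $w\in x\mathbf{R}\cap\mathbf{R}z$, since locating such a pair places $w$ inside one of the terms of the union $\cup_{x,z\in A}(x\mathbf{R}\cap\mathbf{R}z)$ and thereby yields $A\subseteq\cup_{x,z\in A}(x\mathbf{R}\cap\mathbf{R}z)$. The whole argument reduces to extracting these two witnesses from the hypothesis.

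The key observation is that the assumption $A\in\tau_{B}\cap\tau_{F}$ delivers two separate decompositions of $A$ via Lemma 3.1. Because $A\in\tau_{B}$, part (i) of that lemma gives $A=\cup_{x\in A}x\mathbf{R}$, and because $A\in\tau_{F}$, part (ii) gives $A=\cup_{z\in A}\mathbf{R}z$. Applying the first decomposition to $w\in A$ produces some $x\in A$ with $w\in x\mathbf{R}$, and applying the second produces some $z\in A$ with $w\in\mathbf{R}z$. Both witnesses lie in $A$, so $x\mathbf{R}\cap\mathbf{R}z$ is genuinely one of the sets indexed in the target union, and $w$ belongs to it; this completes the inclusion.

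I do not expect a genuinely hard step here. The only points requiring care are that the existential witnesses $x$ and $z$ obtained from the two covers actually belong to $A$ (immediate, since the unions in Lemma 3.1 are indexed by elements of $A$ itself) and that combining the two memberships into an intersection is legitimate. As a sanity check I note that reflexivity of $\mathbf{R}$ already supplies a trivial witness: since $S(w)\subseteq S(w)$ we have both $w\in w\mathbf{R}$ and $w\in\mathbf{R}w$, so one may simply take $x=z=w\in A$ and conclude $w\in w\mathbf{R}\cap\mathbf{R}w$. Hence the stated inclusion in fact holds for \emph{every} subset $A\subseteq W$, and the topological hypothesis $A\in\tau_{B}\cap\tau_{F}$ becomes essential only if one wishes to strengthen it to the equality $A=\cup_{x,z\in A}(x\mathbf{R}\cap\mathbf{R}z)$; the reverse inclusion would then follow by observing that $w\in x\mathbf{R}\subseteq\cup_{a\in A}a\mathbf{R}=A$ through a further application of Lemma 3.1.
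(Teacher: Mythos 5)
Your main argument is correct and is essentially the paper's own proof: both take an arbitrary point of $A$, invoke Lemma 3.1 twice to write $A=\cup_{x\in A}x\mathbf{R}$ (from $A\in\tau_{B}$) and $A=\cup_{z\in A}\mathbf{R}z$ (from $A\in\tau_{F}$), and extract the two witnesses $x,z\in A$ placing the point in $x\mathbf{R}\cap\mathbf{R}z$. What you add beyond the paper, and what is worth highlighting, is your sanity check: since $S(w)\subseteq S(w)$ gives $w\in w\mathbf{R}$ and $w\in\mathbf{R}w$, taking $x=z=w$ shows the inclusion $A\subseteq\cup_{x,z\in A}(x\mathbf{R}\cap\mathbf{R}z)$ holds for \emph{every} $A\subseteq W$, with no appeal to Lemma 3.1 or to the hypothesis $A\in\tau_{B}\cap\tau_{F}$ at all. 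This one-line argument is strictly more elementary and more general than the paper's, and it exposes that the theorem's topological hypothesis is redundant for the stated conclusion. Your further remark is also correct: the hypothesis earns its keep only if the conclusion is upgraded to equality, since $w\in x\mathbf{R}$ with $x\in A$ forces $w\in\cup_{a\in A}a\mathbf{R}=A$ by Lemma 3.1 (indeed $A\in\tau_{B}$ alone already suffices for this reverse inclusion). So while your proof of the statement as written coincides with the paper's, your analysis shows the paper is proving a trivial inclusion with nontrivial machinery, and that the natural theorem lurking here is the equality $A=\cup_{x,z\in A}(x\mathbf{R}\cap\mathbf{R}z)$ for $A\in\tau_{B}\cap\tau_{F}$.
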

\begin{proof}
    Let $A\in \tau_{B}\cap \tau_{F}.$ Then $A\in \tau_{B} $ and $A \in \tau_{F}$. From {lemma 5.1},  $A\in \tau_{B}$ implies  $A=\cup_{x\in A}x\mathbf{R}$, and also, $A\in \tau_{F}$ implies  $A=\cup_{x\in A}\mathbf{R}x$. Thus, for each $y\in A$, there exist $x\in A$, and $z\in A$ such that $y\in x\mathbf{R}$ and $y\in \mathbf{R}z$. It implies $y\in x\mathbf{R}\cap \mathbf{R}z$. Thus, $y\in \cup_{x,z\in A}(x\mathbf{R}\cap \mathbf{R}z).$ So, we have $A\subseteq\cup_{x,z\in A}(x\mathbf{R}\cap \mathbf{R}z)$.
    \end{proof}
    \begin{corollary}
        In $(B,W,S,\mathbf{R})$, if  $A\in \tau_{B}\cup \tau_{F}$, then $A=\cup_{x\in A}(x\mathbf{R}\cup \mathbf{R}x).$
    \end{corollary}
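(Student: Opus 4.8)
The plan is to reduce the corollary to Lemma 3.1 by a two-case analysis, exploiting the fact that membership in $\tau_{B}\cup\tau_{F}$ simply means $A\in\tau_{B}$ or $A\in\tau_{F}$. First I would dispose of the easy inclusion. If $A\in\tau_{B}$, Lemma 3.1(i) gives $A=\cup_{x\in A}x\mathbf{R}$; if instead $A\in\tau_{F}$, Lemma 3.1(ii) gives $A=\cup_{x\in A}\mathbf{R}x$. In either case $A$ is a union of sets each sitting inside the corresponding $x\mathbf{R}\cup\mathbf{R}x$, so $A\subseteq\cup_{x\in A}(x\mathbf{R}\cup\mathbf{R}x)$ follows at once, by the same pointwise argument used in the proof of Theorem 3.3.

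The genuinely delicate direction is the reverse inclusion $\cup_{x\in A}(x\mathbf{R}\cup\mathbf{R}x)\subseteq A$, and this is where I expect the real work to lie. Writing $\cup_{x\in A}(x\mathbf{R}\cup\mathbf{R}x)=(\cup_{x\in A}x\mathbf{R})\cup(\cup_{x\in A}\mathbf{R}x)$, one of the two unions already equals $A$ by Lemma 3.1, so the task collapses to showing that the \emph{other} union is also contained in $A$. In the case $A\in\tau_{B}$ this amounts to proving $\cup_{x\in A}\mathbf{R}x\subseteq A$, i.e. that whenever $x\in A$ and $S(x)\subseteq S(w)$ we must have $w\in A$. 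My first attempt would be to lean on the fact, extracted inside the proof of Lemma 3.1, that $x\mathbf{R}$ is the smallest $\tau_{B}$-open set containing $x$, and to try to force $\mathbf{R}x\subseteq A$ from there; a complementary attempt would invoke the duality of $\tau_{F}$ and $\tau_{B}$ from Theorem 3.2 to convert the forneighborhoods $\mathbf{R}x$ into afterneighborhoods.

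The main obstacle is precisely this reverse containment: it is not clear that a $\tau_{B}$-open set need be closed under passing to forneighborhoods (nor a $\tau_{F}$-open set under afterneighborhoods), so before committing to a full argument I would stress-test the equality on a minimal two-word system with $S(x)\subsetneq S(y)$. If the example preserves equality, the surviving case analysis above should close the proof; if it breaks the equality, then the honest statement is the inclusion $A\subseteq\cup_{x\in A}(x\mathbf{R}\cup\mathbf{R}x)$, whose proof is exactly the easy first paragraph, and the corollary should be recorded in that weaker form to stay parallel with Theorem 3.3.
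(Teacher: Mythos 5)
Your hedging instinct is exactly right, and your proposed two-word stress test in fact refutes the equality. Take $W=\{x,y\}$ with $S(x)\subsetneq S(y)$ (e.g.\ $x=$ \textit{`big data'}, $y=$ \textit{`big'}, which also satisfies the paper's standing comparability assumption). Then $x\mathbf{R}=\{x\}$, $y\mathbf{R}=\{x,y\}$, $\mathbf{R}x=\{x,y\}$, $\mathbf{R}y=\{y\}$, so $\tau_{B}=\{\emptyset,\{x\},W\}$ and $\tau_{F}=\{\emptyset,\{y\},W\}$. The set $A=\{x\}$ belongs to $\tau_{B}\cup\tau_{F}$, yet $\cup_{z\in A}(z\mathbf{R}\cup \mathbf{R}z)=x\mathbf{R}\cup\mathbf{R}x=\{x\}\cup\{x,y\}=W\neq A$. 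So a $\tau_{B}$-open set need not be closed under passing to forneighborhoods --- precisely the obstruction you flagged --- and the corollary as stated is false; only your fallback inclusion $A\subseteq\cup_{x\in A}(x\mathbf{R}\cup\mathbf{R}x)$ survives, with the easy proof you sketched in your first paragraph.

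It is worth naming where the paper's own proof breaks, since you could not see it: from Lemma 3.1 it deduces ``$A=\cup_{x\in A}\mathbf{R}x$ \emph{or} $A=\cup_{x\in A}x\mathbf{R}$'' and then asserts ``Hence, $A=(\cup_{x\in A}\mathbf{R}x)\cup(\cup_{x\in A}x\mathbf{R})$.'' Passing from a disjunction of two equalities to equality with the union is invalid; it requires \emph{both} equalities simultaneously, i.e.\ the hypothesis $A\in\tau_{B}\cap\tau_{F}$ rather than $A\in\tau_{B}\cup\tau_{F}$. Under the intersection hypothesis the equality is genuinely true, since $A=\cup_{x\in A}x\mathbf{R}=\cup_{x\in A}\mathbf{R}x$ forces $\cup_{x\in A}(x\mathbf{R}\cup\mathbf{R}x)=A\cup A=A$; so in effect the hypotheses of Theorem 3.3 and this corollary have been swapped --- with $\tau_{B}\cap\tau_{F}$ one gets equality, while with $\tau_{B}\cup\tau_{F}$ one gets only inclusion. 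Your proposal is therefore not merely a different route from the paper's: its case analysis, duality check, and minimal counterexample correctly diagnose an error that the published one-line proof papers over, and the corollary should be recorded either in your weaker inclusion form or with the strengthened hypothesis $A\in\tau_{B}\cap\tau_{F}$.
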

    \begin{proof}
       It is given that  $A\in \tau_{B}\cup \tau_{F}$. So, from {lemma 5.1}, we have $A=\cup_{x\in A}\mathbf{R}x$ or $A=\cup_{x\in A}x\mathbf{R}$. Hence, $A=(\cup_{x\in A}\mathbf{R}x)\cup (\cup_{x\in A}x\mathbf{R} )$. So, $A=\cup_{x\in A}(\mathbf{R}x\cup x\mathbf{R}).$
    \end{proof} \par
  In big data analytics, users commonly navigate vast and heterogeneous information spaces using keyword-based searches, recommendation engines, and content exploration tools \cite{41,42}. %However, real-world queries are frequently ambiguous, lacking contextual sensitivity, or embedded with multiple semantic layers. 
 For instance, a search for ``Psychology'' might implicitly refer to more specific subdomains such as ``Cognitive Psychology,'' ``Memory,'' or ``Clinical Disorders,'' each associated with its own set of relevant documents. %In the absence of robust semantic structures, traditional search systems tend to yield flat, unorganized, and often irrelevant results.
 \\

There is a practical need to construct hierarchical or neighborhood-based structures over keywords that capture semantic containment. Specifically, if the set of documents retrieved for a keyword $y$ is a subset of those retrieved for another keyword $x$ (i.e., $S(y) \subseteq S(x)$), then $y$ can be considered semantically or contextually more specific than $x$. Identifying and modeling such relationships enable the system to:\\

\begin{enumerate}[(i)]

    \item suggest more specific or broader search keywords based on a user's intent,
  
    \item enable a zoom-in or zoom-out mechanism in interactive search interfaces,
    \item support explainable and traceable recommendation systems, where users can see how concepts are related.
\end{enumerate}
We therefore present an efficient algorithm for constructing forneighborhoods and afterneighborhoods of words.

\begin{algorithm}[H]
\caption{Algorithm for Constructing Forneighborhoods and Afterneighborhoods of Words in Big Data Search. }
\KwIn{Set of keywords $W = \{w_1, w_2, \dots, w_n\}$; Search function $S(w)$ returning search space of $w$
\KwOut{Afterneighborhoods $xR$, Forneighborhoods $Rx$ for all $x \in W$}}

\ForEach{$x \in W$}{
    Initialize $xR \gets \emptyset$\;
    Initialize $Rx \gets \emptyset$\;
    \ForEach{$y \in W$}{
        \If{$S(y) \subseteq S(x)$}{
            $xR \gets xR \cup \{y\}$ 
        }
        \If{$S(x) \subseteq S(y)$}{
            $Rx \gets Rx \cup \{y\}$ 
        }
    }
}
\Return{All $xR$, $Rx$}
\end{algorithm}

\begin{example}

To demonstrate the practical applicability of the aforementioned algorithm, we present a domain-specific scenario based on hierarchical terminology from the field of psychology. The objective is to model search relationships among conceptual keywords where theoretical inclusion is well established—allowing us to reasonably assume that  \( S(y) \subseteq S(x) \).\\

\noindent
In psychology literature searches using resources from the American Psychological Association—such as APA PsycINFO—researchers explore topics ranging from broad psychological domains to specific theories or disorders. Constructing a neighborhood structure based on domain knowledge enables hierarchical filtering, precise query suggestions, and semantically meaningful document grouping.\\

\noindent
For this purpose,  we consider the following terms commonly used in psychology: W = \{Psychology, Cognitive Psychology, Memory, Working Memory, Short-Term Memory, Long-Term Memory, Clinical Psychology, Depression, Anxiety, Cognitive Behavioral Therapy\}. Based on established domain knowledge, we assume the following theoretical subset relationships among their corresponding search spaces:\\
\begin{enumerate} [(i)]  
\item $S(Cognitive $ $  Psychology) \subseteq S(Psychology)$
    \item S(Memory) $\subseteq$ S(Cognitive $ $ Psychology)
    \item $S(Working $ $   Memory)\subseteq S(Memory)$.
    \item S(Short-Term $ $  Memory)$\subseteq$ S(Memory)
    \item S(Long-Term $ $  Memory) $\subseteq$ S(Memory)
    \item $S(Clinical $ $  Psychology) \subseteq S(Psychology)$
    \item $S(Depression)\subseteq S(Clinical $ $   Psychology)$,
    \item $S(Depression)\subseteq S(Clinical $ $   Psychology)$,
    \item $S(Depression)\subseteq S(Clinical $ $   Psychology)$,
    \item $S(Depression)\subseteq S(Clinical $ $   Psychology)$,
    \item $S(Anxiety) \subseteq S(Clinical $ $   Psychology)$
    \item $S(Cognitive $ $   Behavioral  $ $  Therapy) \subseteq S(Depression) \cap S(Anxiety)$
\end{enumerate}
 Now, using Algorithm 1, we compute the following tables of   \( xR \)and  \( Rx\):\\
\begin{table}[H]
\centering
\fontsize{9pt}{11pt}\selectfont
\begin{tabular}{|l|p{10cm}|}
\hline
\textbf{Terms (x)} & \textbf{Afterneighborhood ($xR$)} \\
\hline
Psychology & \{Psychology, Cognitive Psychology, Clinical Psychology, Memory, Working Memory, Short-Term Memory, Long-Term Memory, Depression, Anxiety\} \\
Cognitive Psychology & \{Cognitive Psychology, Memory, Working Memory, Short-Term Memory, Long-Term Memory\} \\
Memory & \{Memory, Working Memory, Short-Term Memory, Long-Term Memory\} \\
Working Memory & \{Working Memory\} \\
Short-Term Memory & \{Short-Term Memory\} \\
Long-Term Memory & \{Long-Term Memory\} \\
Clinical Psychology & \{Clinical Psychology, Depression, Anxiety\} \\
Depression & \{Depression\} \\
Anxiety & \{Anxiety\} \\

\hline
\end{tabular}
\caption{Afterneighborhoods $xR$ for search terms in psychology.}
\end{table}

\medskip

\begin{table}[H]
\centering
\fontsize{9pt}{11pt}\selectfont
\begin{tabular}{|l|p{10cm}|}
\hline
\textbf{Terms (x)} & \textbf{Forneighborhoods ($Rx$)} \\
\hline
CBT & \{Psychology, Clinical Psychology, Depression, Anxiety\} \\
Working Memory & \{Psychology, Cognitive Psychology, Memory, Working Memory\} \\
Short-Term Memory & \{Psychology, Cognitive Psychology, Memory, Short-Term Memory\} \\
Long-Term Memory & \{Psychology, Cognitive Psychology, Memory, Long-Term Memory\} \\
Memory & \{Psychology, Cognitive Psychology, Memory\} \\
Cognitive Psychology & \{Psychology, Cognitive Psychology\} \\
Clinical Psychology & \{Psychology, Clinical Psychology\} \\
Depression & \{Psychology, Clinical Psychology, Depression\} \\
Anxiety & \{Psychology, Clinical Psychology, Anxiety\} \\
Psychology & \{Psychology\} \\
\hline
\end{tabular}
\caption{Forneighborhoods $Rx$ for search terms in psychology.}
\end{table}
\end{example}
\medskip
\subsubsection{$m$- steps relation of $\mathbf{R}$ in $(B, W, S,\mathbf{R})$ :}
     Let us choose three words `Big', `Big Data', and `Big Data Analytics'. Practically, it is important to note that, \textit{S(Big Data)} $\subseteq$ \textit{S(Big)} and \textit{S(Big Data Analytics)} $\subseteq$ \textit{S(Big Data)}. Here, we can have \textit{S(Big Data Analytics)$\subseteq$ S(Big Data) $\subseteq$ S(Big)}. So, \textit{S(Big Data Analytics) $\subseteq$ \textit{S(Big)}}. Mathematically, if we consider $x=$\textit{ Big}, $y=$ \textit{Big Data}, and $z= $ \textit{Big Data Analytics}, then $S(y)\subseteq S(x)$ and $S(z)\subseteq S(y)$. Hence, $S(z)\subseteq S(x)$. So, in
$(B, W, S, R)$, if $x \mathbf{R} y$ and $y \mathbf{R} z$, then $x \mathbf{R} z$. From this, we define $\mathbf{R}^{2}=\{\,(x,z)\mid $ there exists $ y\in W $ such that $ x \mathbf{R} y , y \mathbf{R} z\,\}$. \par 
      Above notion motivates us to define an $m$- steps relation of $\mathbf{R}$ in $(B, W, S, \mathbf{R})$ as follows: \par 
    \begin{definition} 
 In $(B, W, S, \mathbf{R})$, we can define an $m$-steps relation as $\mathbf{R}^{m}=\{\,(x,z)\mid $ there exist $y_{1}, y_{2},...,y_{m-1} \in W$ such that $ S(z)\subseteq S(y_{m-1})\subseteq...\subseteq S(y_{3})\subseteq S(y_{2})\subseteq S(y_{1})\subseteq S(x)\,\}$.
\end{definition}

Practically, when users search for a query in big data, they may not obtain the desired results on their first attempt. Instead, they often refine or adjust their search queries iteratively. The $m$-steps relation, denoted as $\mathbf{R}^{m}$, captures this natural behavior by representing a chain of increasingly refined or generalized search terms. Each step corresponds to a search that yields results that are either more contextually specific or broader. The following example illustrates the $m$-steps relation $\mathbf{R}^{m}$:\\

\begin{example}
 Let us consider a scenario where the user is searching for health-related data. For the user's first search, $m=1$, the user uses the keyword ``\textit{Health}''. It is a very broad term that gives information including diet, mental health, public health, etc. If the user cannot find the desired result, then in the second search, $m=2$, the user refines the search query to, say, ``\textit{Public health}'', which is a more specific term than the first, focused on information related to epidemiology, government policies, etc. If the user is satisfied with the result, then it stops here; otherwise, the user opts for the third search, $m=3$, say, ``\textit{Public health policy in India}''. This sequence represents a 3-steps relation $\mathbf{R}^{3}$ in our framework, where each term's search space is a subset of the previous one: \textit{S(Public health policy in India) $\subseteq$ S(Public health) $\subseteq$ S(Health).} This way we can think of the $m-$steps relation, and this nested process aligns with the user's exploration.\\
    
\end{example}
In the following result, we establish the consistency of the relation $\mathbf{R}^{m}$. Practically, when a user searches for a word using a search engine, it is evident that if the initial results are not highly relevant, the user will continue refining the query for a finite number of steps—say, $m$—until relevant information is found. In our framework, this consistency can be demonstrated by showing that the relation $\mathbf{R}^{m}$ is serial. A serial relation implies that for every word $x$, there exists a word $y$ such that $x$ is related to $y$. In practical terms, this means that for each word $x$ being searched, even after $m$ refinement steps, there must exist a word $y$ in the dataset such that $S(y) \subseteq S(x)$.\\
 \begin{theorem}
          In $(B, W, S, \mathbf{R})$, if $\mathbf{R}$ is serial, then $\mathbf{R}^{m}$ is a serial relation for all $m\geq 1$.
          \end{theorem}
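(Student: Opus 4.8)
The plan is to argue by induction on $m$, using the fact that the defining chain of $\mathbf{R}^{m+1}$ is obtained from that of $\mathbf{R}^{m}$ by prepending a single $\mathbf{R}$-step. Only seriality of $\mathbf{R}$ will be needed throughout, so the argument stays faithful to the stated hypothesis rather than secretly relying on transitivity or reflexivity.

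For the base case $m=1$ I would simply note that $\mathbf{R}^{1}=\mathbf{R}$ is serial by assumption. For the inductive step I assume $\mathbf{R}^{m}$ is serial and fix an arbitrary $x\in W$. First I would invoke seriality of $\mathbf{R}$ to produce a word $y_{1}\in W$ with $x\mathbf{R}y_{1}$, that is, $S(y_{1})\subseteq S(x)$. Then I would apply the induction hypothesis to $y_{1}$: seriality of $\mathbf{R}^{m}$ yields a word $z\in W$ together with intermediate words witnessing a chain $S(z)\subseteq\cdots\subseteq S(y_{1})$ of the correct length. Splicing $S(y_{1})\subseteq S(x)$ onto the top of this chain produces $S(z)\subseteq\cdots\subseteq S(y_{1})\subseteq S(x)$, a chain with exactly $m$ intermediate terms, which is precisely the witness required for $(x,z)\in\mathbf{R}^{m+1}$. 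Since $x$ was arbitrary, $\mathbf{R}^{m+1}$ is serial, closing the induction.

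The only point that demands care is the index bookkeeping in the splicing step: I must verify that adjoining one inclusion to a chain with $m-1$ intermediate words produces exactly the $m$ intermediate words demanded by the definition of $\mathbf{R}^{m+1}$, rather than one too few or too many. Beyond this I anticipate no genuine obstacle. A cleaner route that sidesteps the formal induction is to start from any $x$ and apply seriality of $\mathbf{R}$ repeatedly $m$ times, extracting $y_{1},y_{2},\ldots,y_{m-1},z$ with $S(z)\subseteq S(y_{m-1})\subseteq\cdots\subseteq S(y_{1})\subseteq S(x)$; this descending chain is by definition a witness for $(x,z)\in\mathbf{R}^{m}$. I would also remark that, because the relation $\mathbf{R}$ in a POBDS is reflexive by Theorem 3.1, one in fact has $(x,x)\in\mathbf{R}^{m}$ for every $x$, so in this concrete setting $\mathbf{R}^{m}$ is even reflexive and hence serial immediately; the seriality-only argument above is the version that remains valid when reflexivity is dropped.
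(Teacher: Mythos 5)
Your proposal is correct and takes essentially the same route as the paper: the paper's own proof is exactly your ``cleaner'' alternative, applying seriality of $\mathbf{R}$ repeatedly $m$ times to produce $y_{1},y_{2},\dots,y_{m}$ with $S(y_{m})\subseteq S(y_{m-1})\subseteq\cdots\subseteq S(y_{1})\subseteq S(x)$, which is precisely a witness for $x\mathbf{R}^{m}y_{m}$; your induction merely packages that iteration formally (and your splicing bookkeeping does check out, since prepending $S(y_{1})\subseteq S(x)$ to a chain with $m-1$ intermediate words yields the $m$ intermediates required for $\mathbf{R}^{m+1}$). Your closing observation that in a POBDS the relation $\mathbf{R}$ is reflexive, so $\mathbf{R}^{m}$ is reflexive and hence serial without any hypothesis, is a sharpening the paper does not record, but it does not alter the substance of the argument.
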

          \begin{proof}
                 It is given that $\mathbf{R}$ is serial. So, for each $x\in W$, there exists $y\in W$ such that $x\mathbf{R
}y$ and it implies that $S(y)\subseteq S(x).$ We aim to show that $\mathbf{R}^{m}$ is a serial relation. Since $\mathbf{R}$ is serial for each $x\in W$,  we have $y_{1}\in W$ such that $x\mathbf{R}y_{1}$. Similarly, since $y_{1}\in W$, there exists $y_{2}\in W$ such that $y_{1}\mathbf{R}y_{2}$. In a similar manner, for each $y_{i}\in W$, there exists $y_{i+1}\in W$ such that $y_{i} \mathbf{R} y_{i+1}$, where $i=1,2,...,m-1$. We define $y_{m}=y.$ Now, $x\mathbf{R}y_{1}$ implies $S(y_{1})\subseteq S(x).$ Similarly, continuing this process,  we obtain $S(y_{i+1})\subseteq S(y_{i})$. Then, $S(y)\subseteq S(y_{m-1})\subseteq...\subseteq S(y_{2})\subseteq S(y_{1})\subseteq S(x)$. Thus, for each $x$, there exists $y$ in $W$ such that $x\mathbf{R}^{m}y$. Hence, by definition 3.2,  $\mathbf{R}^{m}$ is a serial relation.
 \end{proof}
 \begin{theorem}
              In $(B, W, S, \mathbf{R})$, $\mathbf{R}^{m}$ is a preorder relation.
          \end{theorem}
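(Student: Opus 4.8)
The plan is to show that the $m$-steps relation $\mathbf{R}^{m}$ in fact coincides with $\mathbf{R}$ itself, and then invoke Theorem 3.1, which already establishes that $\mathbf{R}$ is a preorder. This reduces the whole task to a single set-containment equivalence, after which reflexivity and transitivity come for free.

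First I would prove the inclusion $\mathbf{R}^{m}\subseteq\mathbf{R}$. If $(x,z)\in\mathbf{R}^{m}$, then by Definition 3.3 there exist $y_{1},\dots,y_{m-1}\in W$ with $S(z)\subseteq S(y_{m-1})\subseteq\cdots\subseteq S(y_{1})\subseteq S(x)$; since set inclusion is transitive, collapsing this chain gives $S(z)\subseteq S(x)$, i.e. $(x,z)\in\mathbf{R}$. Next I would prove the reverse inclusion $\mathbf{R}\subseteq\mathbf{R}^{m}$: given $(x,z)\in\mathbf{R}$ we have $S(z)\subseteq S(x)$, and choosing the witnesses $y_{1}=\cdots=y_{m-1}=x$ produces the chain $S(z)\subseteq S(x)\subseteq\cdots\subseteq S(x)$, so $(x,z)\in\mathbf{R}^{m}$. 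Combining the two inclusions yields $\mathbf{R}^{m}=\mathbf{R}$, and Theorem 3.1 finishes the argument.

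Should a self-contained proof in the style of Theorem 3.1 be preferred, I would instead verify the two preorder axioms directly. For reflexivity, taking $y_{1}=\cdots=y_{m-1}=x$ makes every containment in Definition 3.3 an equality $S(x)\subseteq S(x)$, so $(x,x)\in\mathbf{R}^{m}$. For transitivity, from $(x,z)\in\mathbf{R}^{m}$ and $(z,w)\in\mathbf{R}^{m}$ I would extract $S(z)\subseteq S(x)$ and $S(w)\subseteq S(z)$ by collapsing the two chains, deduce $S(w)\subseteq S(x)$, and then re-pad this single containment back to a chain of length $m-1$ (again with repeated witnesses) to certify $(x,w)\in\mathbf{R}^{m}$.

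There is no deep obstacle here; the only point requiring care is the bookkeeping forced by the definition's insistence on exactly $m-1$ intermediate words. The concatenation of two admissible chains produces a chain that is too long, so one cannot merely glue the witness lists together. The cleanest way around this---and the crux of the argument---is the observation that the intermediate witnesses carry no information beyond the endpoint containment $S(z)\subseteq S(x)$, which is exactly why the collapse-and-repad device (equivalently, the identity $\mathbf{R}^{m}=\mathbf{R}$) does all the work.
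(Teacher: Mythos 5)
Your proposal is correct, and your primary route is genuinely different from the paper's. The paper verifies the two axioms directly in the style of Theorem 3.1: reflexivity via the repeated chain $S(x)\subseteq S(x)\subseteq\cdots\subseteq S(x)$ (your padding device), and transitivity by concatenating the two witness chains through $S(z)$ and then passing---without comment---from the concatenated chain, which has $2m-1$ intermediate terms, to the length-$(m-1)$ chain $S(y)\subseteq S(y_{m-1})\subseteq\cdots\subseteq S(y_{1})\subseteq S(x)$. That silent step is legitimate precisely because of the observation you make explicit: the witnesses carry no information beyond the endpoint containment, so one may collapse the second chain to $S(y)\subseteq S(z)$ and reuse the first chain's witnesses. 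You instead promote this observation to the identity $\mathbf{R}^{m}=\mathbf{R}$ (collapsing gives $\mathbf{R}^{m}\subseteq\mathbf{R}$; padding with $y_{1}=\cdots=y_{m-1}=x$ gives the reverse, since Definition 3.3 nowhere requires distinct witnesses or strict containments) and then quote Theorem 3.1. Your version handles cleanly the bookkeeping the paper glosses over, and the identity is strictly more informative: it makes Theorem 3.3 (seriality of $\mathbf{R}^{m}$) immediate as well, and it is tacitly what the paper itself invokes later in proving Theorem 3.8, where $x\mathbf{R}^{m}y$ is collapsed to $S(y)\subseteq S(x)$. What the paper's direct verification buys in exchange is self-containedness and robustness: it keeps the chain structure in view and would adapt to a variant definition (e.g.\ requiring strict containments or distinct intermediate words) under which your identity $\mathbf{R}^{m}=\mathbf{R}$ would fail, whereas your reduction exposes the $m$-steps relation, as actually defined, as adding nothing beyond $\mathbf{R}$.
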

    \begin{proof}
              \textit{Reflexivity:}
         This is obvious since, for any $x\in W$, $S(x)\subseteq S(x).$ It implies that $S(x)\subseteq S(x)\subseteq...\subseteq S(x)\subseteq S(x)\subseteq S(x)$(up to $m$-times). So, for each $x\in W$, $x \mathbf{R}^{m} x$. Hence, $\mathbf{R}^{m}$ is a reflexive relation.\par
     \textit{Transitivity:}
Here, $\mathbf{R}=\{\,(x,y)\mid x,y \in W$ and $S(y)\subseteq S(x)\,\}$. To show that, for any $x,y,z\in W, x \mathbf{R}^{m} z$ and $z\mathbf{R}^{m} y$ implies $x \mathbf{R}^{m} y.$ If $ x \mathbf{R}^{m} z $, then there exists $y_{i}
, i=1,2,...,m-1$ such that $S(z)\subseteq S(y_{m-1})\subseteq...\subseteq S(y_{2})\subseteq S(y_{1}) \subseteq S(x).$ Again if $z\mathbf{R}^{m}y$, then there exist $y^{/}_{i}, i=1,2,...,m-1$ such that $S(y)\subseteq S(y^{/}_{m-1})\subseteq...\subseteq S(y^{/}_{2})\subseteq S(y^{/}_{1}) \subseteq S(z)$. Thus, we have $S(y)\subseteq S(y^{/}_{m-1})\subseteq...\subseteq S(y^{/}_{2}
)\subseteq S(y^{/}_{1}) \subseteq S(z)\subseteq S(y_{m-
1})\subseteq...\subseteq S(y_{2})\subseteq S(y_{1}) \subseteq S(x)$. So, we have $S(y)\subseteq S(y_{m-1})\subseteq...\subseteq S(y_{2})\subseteq S(y_{1}) \subseteq S(x).$ This implies that $x\mathbf{R}^{m}y$. Hence, $\mathbf{R}^{m}$ is a transitive relation. Thus, $\mathbf{R}^{m}$ is a preorder relation.  \end{proof}
\begin{remark}
    Since $\mathbf{R}^{m}$ is a preorder in $(B, W, S, \mathbf{R})$, hence the system $(B, W, S, \mathbf{R}^{m})$ is also a POBDS.
\end{remark} 
Now, we define forneighborhood and afterneighbourhood for 
$R^{m}$ as given below:
               \begin{definition}
                   In $(B,W,S,\mathbf{R}^{m})$, forneighborhood of $x$ for $\mathbf{R}^{m}$ is $\mathbf{R}^{m}_{F}(x)=\{\,y\in W \mid$ there exist $ y_{1},y_{2},...,y_{m-1} $ such that $S(x)\subseteq S(y_{1}) \subseteq S(y_{2})\subseteq...\subseteq S(y_{m-1})\subseteq S( y)\,\}$ and afterneighborhood of $x$ is $\mathbf{R}^{m}_{A}(x)=\{\,y\in W \mid$ there exist $y^{/}_{1},y^{/}_{2},...,y^{/}_{m-1}$ such that $S(y)\subseteq S(y^{/}_{m-1}) \subseteq S(y^{/}_{m-
2})\subseteq...\subseteq S(y^{/}_{2})\subseteq S( y^{/}_{1})\subseteq S(x)\,\}$
               \end{definition}
               Here, we present key results concerning the hierarchical structure of the $m$-steps relation. As illustrated in figure 4, we observe that appending terms  to the word `$Big$' at each step at each step causes the associated search space to contract. Specifically, as the number of search steps $m$ increases, the afterneighborhood of $`Big$' refines the search space, making it more specific and relevant. In contrast, the forneighborhood of a word expand the search space, rendering it broader and more contextually inclusive. The following definitions and results formalize these properties within the framework of our big data search approach:\\
               
               \begin{definition}
                   Let $E$ and $F$ be two subsets of $W$. Then, 
                   \begin{enumerate}
                    
              \item a relation `$\preceq$' on subsets of $W$ such that $E\preceq F$ as   if for each $y\in F$, there exists $z\in E$ such that $S(y)\subseteq S(z)$,
                       \item a relation $`\succeq'$ on  subsets of $W$ such that $E\succeq F$ as for each $y\in F$, there exists $z\in E$ such that $S(z)\subseteq S(y).$ 
                        \end{enumerate}
               \end{definition}
\begin{theorem}
          In $(B,W,S,\mathbf{R}^{m})$, the neighborhood system $\{\,\mathbf{R}^{m}_{A}(x)\mid m\geq 1\,\}$ satisfies the condition $\mathbf{R}^{m}_{A}(x)\preceq \mathbf{R}^{m+1}_{A}(x),\forall m\in \mathbf{N}.$
      \end{theorem}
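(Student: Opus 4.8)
The plan is to unwind the relation $\preceq$ directly from its definition. Saying $\mathbf{R}^{m}_{A}(x)\preceq \mathbf{R}^{m+1}_{A}(x)$ means that for every $y\in \mathbf{R}^{m+1}_{A}(x)$ there must exist a witness $z\in \mathbf{R}^{m}_{A}(x)$ with $S(y)\subseteq S(z)$. So I would fix an arbitrary $m\in\mathbf{N}$ and an arbitrary $y\in \mathbf{R}^{m+1}_{A}(x)$, and then construct the witness $z$ explicitly.

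First I would expand the membership $y\in \mathbf{R}^{m+1}_{A}(x)$ using the definition of the afterneighborhood: there exist words $y^{/}_{1},\dots,y^{/}_{m}$ giving the chain $S(y)\subseteq S(y^{/}_{m})\subseteq S(y^{/}_{m-1})\subseteq\cdots\subseteq S(y^{/}_{1})\subseteq S(x)$. The guiding idea is to truncate this chain at its bottom intermediate link, taking $z:=y^{/}_{m}$, the word sitting immediately above $y$.

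Next I would check the two requirements. The containment $S(y)\subseteq S(z)$ is read off immediately as the first link of the chain. For the membership $z=y^{/}_{m}\in \mathbf{R}^{m}_{A}(x)$, I would recycle the upper portion of the same chain, namely $S(y^{/}_{m})\subseteq S(y^{/}_{m-1})\subseteq\cdots\subseteq S(y^{/}_{1})\subseteq S(x)$, which employs precisely the $m-1$ intermediate words $y^{/}_{m-1},\dots,y^{/}_{1}$ demanded by the definition of $\mathbf{R}^{m}_{A}(x)$. This yields $z\in \mathbf{R}^{m}_{A}(x)$ and completes the argument, since $z$ then satisfies both clauses required by $\preceq$.

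The only delicate point --- and the main, though modest, obstacle --- is the index bookkeeping: verifying that discarding the lowest link of an $(m+1)$-step chain leaves exactly an $m$-step chain, so that the witness genuinely lies in $\mathbf{R}^{m}_{A}(x)$. I would treat the base case $m=1$ explicitly, where $\mathbf{R}^{1}_{A}(x)$ admits no intermediate words and the truncated chain collapses to the single containment $S(y^{/}_{1})\subseteq S(x)$, confirming $y^{/}_{1}\in \mathbf{R}^{1}_{A}(x)$. No further machinery is needed: the whole statement is a direct repackaging of the transitivity of $\subseteq$ already encoded in the chain definition of $\mathbf{R}^{m}_{A}$.
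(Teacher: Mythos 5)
Your proposal is correct and is essentially the paper's own argument: both expand $y\in\mathbf{R}^{m+1}_{A}(x)$ into its chain $S(y)\subseteq S(y^{/}_{m})\subseteq\cdots\subseteq S(y^{/}_{1})\subseteq S(x)$ and take the word $y^{/}_{m}$ immediately above $y$ as the witness in $\mathbf{R}^{m}_{A}(x)$ demanded by the definition of $\preceq$. If anything, your version is slightly more careful than the paper's, since you verify the index count of the truncated chain and treat the base case $m=1$ explicitly.
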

      \begin{proof}
          Let $y\in \mathbf{R}^{m+1}_{A}
(x)$, $\forall m\in N$. Then, there exist $y_{1},y_{2},...,y_{m}$ in $W$ such that $S(y)\subseteq S(y_{m}) \subseteq S(y_{m-1})\subseteq...\subseteq S(y_{2})\subseteq S(y_{1})\subseteq S(x).$ This implies that $y_{i}\in \mathbf{R}^{i}_{A}(x),\forall i=1,2,..,m.$ Thus, for each $y\in \mathbf{R}^{m+1}_{A}(x)$, there exists $y_{m}\in \mathbf{R}^{m}_{A}(x)$ such that $S(y)\subseteq S(y_{m}).$ Hence, by definition 5.4, we have $\mathbf{R}^{m}_{A}(x)\preceq \mathbf{R}^{m+1}_{A}(x),\forall m\in \mathbf{N}$.
     \end{proof}

     \begin{theorem}

          In $(B,W,S,\mathbf{R}^{m})$, the neighborhood system $\{\,\mathbf{R}^{m}_{F}(x)\mid m\geq 1\,\}$ satisfies the condition $\mathbf{R}^{m}_{F}(x)\succeq \mathbf{R}^{m+1}_{F}(x),\forall m\in \mathbf{N}.$
    \end{theorem}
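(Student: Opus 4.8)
The plan is to mirror the proof of the preceding theorem almost verbatim, reversing every inclusion so that the relation $\succeq$ replaces $\preceq$. First I would record what actually has to be checked: by the definition of $\succeq$, establishing $\mathbf{R}^{m}_{F}(x)\succeq \mathbf{R}^{m+1}_{F}(x)$ is equivalent to showing that for every $y\in \mathbf{R}^{m+1}_{F}(x)$ there is a witness $z\in \mathbf{R}^{m}_{F}(x)$ with $S(z)\subseteq S(y)$. So the whole argument reduces to producing, for each such $y$, a single element of the smaller-index forneighborhood that sits below $y$ in the search-inclusion order.

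Next I would unwind the hypothesis. Fixing $m\in\mathbf{N}$ and taking an arbitrary $y\in \mathbf{R}^{m+1}_{F}(x)$, the definition of the forneighborhood supplies words $y_{1},y_{2},\dots,y_{m}\in W$ with $S(x)\subseteq S(y_{1})\subseteq S(y_{2})\subseteq\dots\subseteq S(y_{m})\subseteq S(y)$. The key choice is to take $z=y_{m}$, the \emph{last} intermediate word of this chain. Truncating the chain just before $y$ leaves $S(x)\subseteq S(y_{1})\subseteq\dots\subseteq S(y_{m-1})\subseteq S(y_{m})$, which is precisely a witnessing chain with $m-1$ intermediate words $y_{1},\dots,y_{m-1}$, so by the same definition $y_{m}\in \mathbf{R}^{m}_{F}(x)$. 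At the same time the final inclusion of the original chain gives $S(y_{m})\subseteq S(y)$, that is $S(z)\subseteq S(y)$. Since $y$ was arbitrary and $z=y_{m}$ meets both requirements, the defining condition of $\succeq$ holds and the theorem follows. The count of intermediate words matches automatically for every $m\geq 1$, so the base case $m=1$ (where the chain for $\mathbf{R}^{2}_{F}(x)$ has the single intermediate word $y_{1}=z$ and the chain for $\mathbf{R}^{1}_{F}(x)$ has none) is covered by the same choice.

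I do not expect a genuine obstacle here; the statement is the forneighborhood counterpart of the afterneighborhood theorem already proved, and the identical definition-chasing template applies. The only point requiring care is orientation. In the forneighborhood the inclusions run \emph{upward}, $S(x)\subseteq\dots\subseteq S(y)$, so the witness inequality must be $S(z)\subseteq S(y)$ (the $\succeq$ direction) rather than $S(y)\subseteq S(z)$; correspondingly the correct witness is the last intermediate word $y_{m}$, not the first. Choosing the wrong endpoint of the chain would yield either an inclusion in the wrong direction or a truncated chain of the wrong length, so fixing this orientation is the whole content of the argument, after which no further estimates are needed.
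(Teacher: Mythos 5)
Your proposal is correct and takes essentially the same route as the paper's own proof: both unwind the definition of $\mathbf{R}^{m+1}_{F}(x)$ to get a chain $S(x)\subseteq S(y_{1})\subseteq \cdots \subseteq S(y_{m})\subseteq S(y)$, truncate it so that $y_{m}\in \mathbf{R}^{m}_{F}(x)$, and use the final inclusion $S(y_{m})\subseteq S(y)$ as the witness demanded by the definition of $\succeq$. Your explicit remarks about choosing the \emph{last} intermediate word and about the chain-length bookkeeping (including the case $m=1$) are exactly the details the paper's proof uses implicitly, so there is nothing to correct.
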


     \begin{proof}
       Let $y\in \mathbf{R}^{m+1}_{F}(x), \forall m\in N. $ Then, there exist $y_{1},y_{2},...,y_{m} $ in $ W $ such that $S(x)\subseteq S(y_{1}) \subseteq S(y_{2})\subseteq...\subseteq S(y_{m-1})\subseteq S(y_{m})\subseteq S(y).$ This implies that $y_{i}\in \mathbf{R}^{i}_{F}(x), \forall i=1,2,...,m.$ Thus, for each $y\in \mathbf{R}^{m+1}_{F}(x)$, there exists $y_{m}\in \mathbf{R}^{m}_{F}(x)$ such that $S(y_{m})\subseteq S(y).$ Hence, by  definition 5.4, we have  $\mathbf{R}^{m}_{F}(x)\succeq \mathbf{R}^{m+1}_{F}(x),\forall m\in \mathbf{N}.$
     \end{proof}
      The above result helps us to construct a chain of neighborhoods of $x\in W$ in $(B,W,S,\mathbf{R}^{m})$ such as $\mathbf{R}^{1}_{A}(x)\preceq \mathbf{R}^{2}_{A}(x)\preceq \mathbf{R}^{3}_{A}(x)\preceq...\preceq \mathbf{R}^{m}_{A}(x)\preceq...$. It leads to the concept of topology generated by this chain in $(B,W, S,\mathbf{R}^{m})$ in the following section:
\begin{definition}  In $(B,W,S,\mathbf{R}^{m})$, the topologies $\mathcal{N}_{F}$ and $\mathcal{N}_{A}$ can be defined as the topologies generated by the collections $\{\,\mathbf{R}^{m}_{F}(x)\mid x\in W\,\}$ and $\{\,\mathbf{R}^{m}_{A}(x)\mid x\in W\,\}$ as subbasis respectively.
\end{definition}
\begin{theorem}
    In $(B,W,S,\mathbf{R}^{m})$, the following results hold: \begin{enumerate}[(i)]
        \item  $D=\cup_{x\in D}\mathbf{R}^{m}_{F}(x),$ for any $D\in \mathcal{N_{F}}$,
        \item $E=\cup_{x\in E}\mathbf{R}^{m}_{A}(x),$ for any $E\in \mathcal{N_{A}}$.
    \end{enumerate}
    \end{theorem}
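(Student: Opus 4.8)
The plan is to prove both parts by the same double-inclusion argument, carrying out (i) in detail and obtaining (ii) by the dual argument with all inclusions reversed. The model for the whole proof is Lemma 3.1, since $\mathcal{N}_F$ is generated by the subbasis $\{\mathbf{R}^{m}_{F}(x)\mid x\in W\}$ exactly as $\tau_B$ was generated by $\{x\mathbf{R}\mid x\in W\}$; the only extra bookkeeping here is the chain of intermediate words $y_1,\dots,y_{m-1}$ appearing in Definition 3.4.

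First I would dispose of the easy inclusion $D\subseteq\cup_{x\in D}\mathbf{R}^{m}_{F}(x)$. For any $z\in D$ the chain $S(z)\subseteq S(z)\subseteq\cdots\subseteq S(z)$, obtained by taking every intermediate word equal to $z$, shows $z\in\mathbf{R}^{m}_{F}(z)$, so $z$ lies in the union; this is just the reflexivity recorded in Theorem 3.5.

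The substance of the proof, and the step I expect to be the main obstacle, is the reverse inclusion, which I would reduce to the claim that $\mathbf{R}^{m}_{F}(x)$ is the smallest $\mathcal{N}_F$-open set containing $x$. The crucial simplification is that the intermediate words can always be absorbed into the endpoints: because subset inclusion is transitive, $w\in\mathbf{R}^{m}_{F}(x)$ holds if and only if $S(x)\subseteq S(w)$, one direction collapsing the chain and the other rebuilding it by setting $y_1=\cdots=y_{m-1}=w$. Granting this, I would establish the absorbing property that $y\in\mathbf{R}^{m}_{F}(x)$ forces $\mathbf{R}^{m}_{F}(y)\subseteq\mathbf{R}^{m}_{F}(x)$: if $S(x)\subseteq S(y)$ and $w\in\mathbf{R}^{m}_{F}(y)$, then $S(x)\subseteq S(y)\subseteq S(w)$, whence $w\in\mathbf{R}^{m}_{F}(x)$. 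This is precisely the transitivity of $\mathbf{R}^{m}$ from Theorem 3.5.

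From the absorbing property, minimality follows in the standard way. Given any $\mathcal{N}_F$-open set $G$ with $x\in G$, there is a basic open set $B=\bigcap_{i=1}^{n}\mathbf{R}^{m}_{F}(x_i)$ with $x\in B\subseteq G$; from $x\in\mathbf{R}^{m}_{F}(x_i)$ and the absorbing property I obtain $\mathbf{R}^{m}_{F}(x)\subseteq\mathbf{R}^{m}_{F}(x_i)$ for each $i$, hence $\mathbf{R}^{m}_{F}(x)\subseteq B\subseteq G$. Applying this with $G=D$ gives $\mathbf{R}^{m}_{F}(x)\subseteq D$ for every $x\in D$, so $\cup_{x\in D}\mathbf{R}^{m}_{F}(x)\subseteq D$ and equality holds. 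For part (ii) I would run the identical argument with $\mathbf{R}^{m}_{A}(x)$ in place of $\mathbf{R}^{m}_{F}(x)$, using $w\in\mathbf{R}^{m}_{A}(x)\iff S(w)\subseteq S(x)$ together with the reversed chain of Definition 3.4; the only genuine risk there is keeping the direction of every inclusion consistent in passing from the forneighborhood to the afterneighborhood.
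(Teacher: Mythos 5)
Your proposal is correct and takes essentially the same route as the paper: the paper's entire proof is the observation that the chain in the definition of $\mathbf{R}^{m}$ collapses by transitivity of $\subseteq$, so that $\mathbf{R}^{m}_{F}(x)$ and $\mathbf{R}^{m}_{A}(x)$ reduce to the one-step neighborhoods $\mathbf{R}x$ and $x\mathbf{R}$, after which it simply cites Lemma 3.1. You make the same collapse your crucial step (and, usefully, prove both directions of it and spell out the subbasis-to-basis bookkeeping, which the paper leaves implicit) and then re-execute the smallest-open-set argument of Lemma 3.1 rather than invoking it.
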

\begin{proof}
     Here, $\mathbf{R}^{m}$ is a preorder relation, and according to definition 5.2, it is clear that $x \mathbf{R}^{m} y \iff $ there exist $y_{1},y_{2},...,y_{m-1}$ in $W$ such that $ S(y)\subseteq S(y_{m-1})\subseteq...\subseteq S(y_{3})\subseteq S(y_{2})\subseteq S(y_{1})\subseteq S(x)$. So, we can say that $S(y)\subseteq S(x).$ Hence, by lemma 5.1, it is clear that results (i) and (ii) hold. 
\end{proof}
\noindent

The relationship between words in data searching is rarely uniform. At times, words are strongly related, while at other times, they share a weak or partial connection. For example, the words `$Big$' and `$Big$ $data$' are strongly related by our relation $\mathbf{R}$ as $S(Big$ $data)\subseteq S(Big)$, but words like `$Big$ and `$Data$', they are weakly related because their respective search spaces have some overlap only. Thus, a weighted approach for the study of the neighborhood structure of words becomes vital so that we may answer how much these types of words are related to each other.  Let us introduce a mathematical formula for the weight of relation $\mathbf{R}$, which measures the proportion of overlaps of search spaces. Let $x$ and $y$ be any two words; then the weight of $\mathbf{R}$ is\\ $${W}_{\mathbf{R}}(x,y)=\frac{|S(x)\cap S(y)|}{|S(x)|}$$.\\ It is easy to check that $0\leq {W}_{\mathbf{R}}(x,y)\leq 1$.  Clearly, this formula measures the extent to which $S(y)$ is included in $S(x)$. With this formula, we can further define the weighted forneighborhood and the weighted afterneighborhood of words in big data search. Let $x$ and $y$ be any two words and $\alpha$ predefined by the user. Then, the weighted afterneighborhood of $x$ can be defined as $$x\mathbf{R_{\alpha}}=\{\,y\mid W_{\mathbf{R}}(x,y)>\alpha\,\} $$\\ which contains all the words $y$ that depend on $x$, and $W_{\mathbf{R}}(x,y)$ measures the extent to which $S(y)$ is included in $S(x). $ For example, if $x=Big$, then $Big\mathbf{R_{\alpha}}$ contains the words like $Big$ $data, Big$ $movie$, with high weights, as their search spaces significantly overlap or subset with $S(Big). $ \\ Similarly, weighted forneighborhood of $x$ can be defined as: $$\mathbf{R_{\alpha}}x=\{\,y\mid W_{\mathbf{R}}(y,x)>\alpha\,\} $$ contains all the words $y$ on which $x$ depends, and $W_{\mathbf{R}}(y,x)$ measures the extent to which $S(x)$ is included in $S(y)$. For example, if $x= Big$ $data$, then the weighted forneighborhood of $Big$ $data$ includes broader terms like $`Big$' or $`data$' depending on the weightage.\\

In the previous section, we discussed various notions and results related to search results of a single keyword. Now, we discuss ideas related to the search space of a set of keywords, i.e., for $A\subseteq W$ in $(B, W, S, \mathbf{R})$, we define the search space of $A$ as $S(A)=\{\, S(x)\mid x\in A\,\}$.
 \begin{example}
    Let $A$ be the collection of all keywords related to a customer's purchasing behavior for groceries at Walmart. So, $S(A)$ is the collection of all search spaces for each grocery item, which may include factors such as the frequency of purchases, preferences for items across different age groups, and other related attributes.

\end{example}
    \begin{lemma}
       In $(B,W,S,\mathbf{R})$, let $A$, $B$ be two subsets of $W$ such that $A\subseteq B$.Then, $S(A)\subseteq S(B).$
        \end{lemma}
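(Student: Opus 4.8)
The plan is to prove this by a direct element-chasing argument at the level of the collection-valued assignment $S(\cdot)$, where we recall that for a subset $A\subseteq W$ the search space is the \emph{family} $S(A)=\{\,S(x)\mid x\in A\,\}$. Since $S(A)$ and $S(B)$ are both sets whose members are of the form $S(x)$ for words $x$ drawn from $A$ and $B$ respectively, establishing $S(A)\subseteq S(B)$ reduces to showing that every member of the family $S(A)$ is also a member of the family $S(B)$. No structural property of $\mathbf{R}$ (reflexivity, transitivity, the preorder of Theorem~3.1, etc.) is needed; this is purely a monotonicity statement about images.

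First I would fix an arbitrary element $t\in S(A)$. By the defining set-builder description of $S(A)$, such a $t$ must coincide with $S(x)$ for some word $x\in A$. Unpacking this set-builder notation is the only genuine step in the argument, and it simply records which word witnesses membership of $t$ in $S(A)$.

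Next I would invoke the hypothesis $A\subseteq B$. Since $x\in A$ and $A\subseteq B$, we immediately obtain $x\in B$. Consequently $t=S(x)$ is, by the very definition of $S(B)=\{\,S(w)\mid w\in B\,\}$, a member of the family $S(B)$, witnessed by the same word $x$. Hence $t\in S(B)$, and because $t$ was arbitrary we conclude $S(A)\subseteq S(B)$.

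The hard part, frankly, is that there is no hard part: the lemma is the standard fact that the image of a subset under an assignment is contained in the image of any superset. The only point worth flagging for the reader is conceptual rather than technical, namely that $S(A)$ is a \emph{set of search spaces} (a family of subsets of $B$) and not the union $\bigcup_{x\in A}S(x)$; the inclusion being asserted is therefore an inclusion of families, and it holds verbatim for the same reason in either reading.
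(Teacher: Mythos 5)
Your proposal is correct and follows essentially the same element-chasing argument as the paper: take an arbitrary member $S(x)\in S(A)$, use $A\subseteq B$ to get $x\in B$, and conclude $S(x)\in S(B)$. Your added remark that $S(A)$ is a family of search spaces rather than the union $\bigcup_{x\in A}S(x)$ is a fair clarification, but the core proof is identical to the paper's.
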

        \begin{proof}
Let $S(x)\in S(A)$ be any member. Since $A\subseteq B$, thus $x\in B$. So, $S(x)\in S(B)$. Hence, $S(A)\subseteq S(B).$  
        \end{proof}
        \begin{lemma}
    In $(B,W,S,\mathbf{R})$, let $A$ be any subset of $W$, then $S(A^{c})=S(A)^{c}.$
\end{lemma}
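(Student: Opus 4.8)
The statement is an equality of subsets of the ambient collection $S(W)=\{\,S(x)\mid x\in W\,\}$; since $A\subseteq W$, Lemma 3.2 gives $S(A)\subseteq S(W)$, so the only sensible reading of the right-hand side is $S(A)^{c}=S(W)\setminus S(A)$. The plan is to prove the two set inclusions $S(A)^{c}\subseteq S(A^{c})$ and $S(A^{c})\subseteq S(A)^{c}$ by chasing a generic member $Y=S(x)$ of each side through the definition $S(A)=\{\,S(x)\mid x\in A\,\}$.

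First I would dispose of the inclusion $S(A)^{c}\subseteq S(A^{c})$, which needs nothing beyond the definition of $S(\cdot)$ on sets. Take $Y\in S(W)\setminus S(A)$. Since $Y\in S(W)$, we may write $Y=S(x_{0})$ for some $x_{0}\in W$. If $x_{0}\in A$ then $Y=S(x_{0})\in S(A)$, contradicting $Y\notin S(A)$; hence $x_{0}\in A^{c}$ and therefore $Y=S(x_{0})\in S(A^{c})$. This settles one inclusion with no extra hypothesis, and it is essentially the ``complement of a complement'' direction.

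The reverse inclusion $S(A^{c})\subseteq S(A)^{c}$ is where the real content sits, and I expect it to be the main obstacle. Here one takes $Y\in S(A^{c})$, so $Y=S(x)$ with $x\in A^{c}$, and must show $Y\notin S(A)$. This step can fail if two distinct words share the same search space: a word $x\in A^{c}$ might satisfy $S(x)=S(a)$ for some $a\in A$, in which case $Y=S(x)=S(a)\in S(A)$, so $Y\notin S(A)^{c}$. Thus the equality genuinely requires the injectivity of $S$ on $W$, i.e. that distinct words produce distinct search spaces, and I would either invoke this as a standing assumption on the search function or add it explicitly as a hypothesis. Under it, $x\in A^{c}$ forces $S(x)\neq S(a)$ for every $a\in A$, whence $Y=S(x)\notin S(A)$ and $Y\in S(A)^{c}$.

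Combining the two inclusions yields $S(A^{c})=S(A)^{c}$. The only delicate point in the whole argument is this injectivity of the search function; everything else is a routine definitional element chase, so I would foreground the injectivity requirement and keep the two inclusions short.
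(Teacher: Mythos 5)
Your proof is correct and follows the same definitional element chase as the paper's own proof, but you have in fact been more careful than the paper itself. The paper's argument reads: let $S(x)\in S(A^{c})$; then $x\in A^{c}$ if and only if $x\notin A$; so $S(x)\notin S(A)$ if and only if $S(x)\in S(A)^{c}$. The middle step --- passing from $x\notin A$ to $S(x)\notin S(A)$ --- is precisely the point you isolate: it silently assumes that no word of $A^{c}$ shares its search space with a word of $A$, i.e.\ injectivity of $S$ on $W$, which the paper nowhere states or proves. Your counter-scenario is genuine: if $x\in A^{c}$ with $S(x)=S(a)$ for some $a\in A$ (two distinct keywords returned by exactly the same documents), then $S(x)\in S(A^{c})\cap S(A)$, so $S(A^{c})\nsubseteq S(A)^{c}$ and the lemma as literally stated fails. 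Your unconditional direction $S(A)^{c}\subseteq S(A^{c})$ is also right and correctly needs no such hypothesis, since an element of $S(W)\setminus S(A)$ can have no representative inside $A$ at all; and your reading of the complement as relative to the ambient collection $S(W)$ is the only one under which the equality can hold, since every member of $S(A^{c})$ has the form $S(x)$. In short: same route as the paper, executed with an explicit injectivity hypothesis that the paper's own ``if and only if'' chain tacitly requires --- your version is the rigorous one, and foregrounding that hypothesis is exactly the right call.
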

\begin{proof}
    Let $S(x)\in S(A^{c})$ be any element. Then, $x\in A^{c}$ if and only if $x\notin A$. So, $S(x) \notin S(A)$ if and only if $S(x)\in S(A)
^{c}$. Hence, $S(A^{c})=S(A)^{c}.$
\end{proof}
       \begin{theorem}
        Let $A$, $B$ be any two subsets of $W$. Then, $S(A\cup B)=S(A)\cup S(B).$
             \end{theorem}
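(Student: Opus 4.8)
The plan is to establish the set equality by proving the two inclusions $S(A\cup B)\subseteq S(A)\cup S(B)$ and $S(A)\cup S(B)\subseteq S(A\cup B)$ separately, unwinding the definition $S(A)=\{\,S(x)\mid x\in A\,\}$ exactly as was done in the proofs of Lemma 3.2 and Lemma 3.3. Throughout I adopt the same indexing convention used there, namely that a typical member of a search space $S(A)$ is written $S(x)$ with its witness $x$ lying in $A$.

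First I would take an arbitrary member $S(x)\in S(A\cup B)$. By the definition of the search space of a set, its witness satisfies $x\in A\cup B$, and hence $x\in A$ or $x\in B$. In the first case $S(x)\in S(A)$ and in the second $S(x)\in S(B)$, so in either case $S(x)\in S(A)\cup S(B)$. This yields the forward inclusion $S(A\cup B)\subseteq S(A)\cup S(B)$.

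For the reverse inclusion I would start from an arbitrary $S(x)\in S(A)\cup S(B)$, so that $S(x)\in S(A)$ or $S(x)\in S(B)$. If $S(x)\in S(A)$ then $x\in A\subseteq A\cup B$, and if $S(x)\in S(B)$ then $x\in B\subseteq A\cup B$; in both situations $x\in A\cup B$ and therefore $S(x)\in S(A\cup B)$. Combining the two inclusions gives $S(A\cup B)=S(A)\cup S(B)$.

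The argument is essentially a routine translation of the elementary fact that forming the set of images commutes with unions, and no step is genuinely difficult. The only point warranting any care is the implicit correspondence between a member $S(x)$ of a search space and its keyword $x$; since the statement and the surrounding lemmas phrase everything in terms of this correspondence, I would simply follow that convention rather than tracking whether the search function $S$ is injective, which is not needed for the union identity.
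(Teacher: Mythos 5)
Your proof is correct and takes essentially the same approach as the paper: both establish the identity by the same element-chasing on witnesses $x$, and your forward inclusion is word-for-word the paper's. The only cosmetic difference is that the paper obtains the reverse inclusion $S(A)\cup S(B)\subseteq S(A\cup B)$ by citing the monotonicity Lemma 3.2 applied to $A\subseteq A\cup B$ and $B\subseteq A\cup B$, whereas you unwind the definition directly, which is the same computation inlined.
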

        \begin{proof}
            We have $A\subseteq A\cup B $. So, by lemma 5.2, $S(A)\subseteq S(A\cup B)$. Similarly, we have $S(B)\subseteq S(A\cup B).$  Thus, 
$S(A)\cup S(B) \subseteq S(A\cup B).$ Again, let $S(x)\in S(A\cup B)$ be any element. Then, $x\in A\cup B$. So, $x\in A$ or $x\in B$. Thus, $S(x)\in S(A)$ or $S(x)\in S(B).$ Thus, $S(x)\in S(A)\cup S(B)$. Hence, $S(A\cup B)\subseteq S(A)\cup S(B)$. So, we get $S(A\cup B)=S(A)\cup S(B).$\par 
        \end{proof}   
\begin{theorem}
    Let $A$, $B$ be any two subsets of $W$. Then $S(A\cap B)=S(A)\cap S(B).$
\end{theorem}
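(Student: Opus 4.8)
The plan is to prove the set equality $S(A\cap B)=S(A)\cap S(B)$ by double inclusion, following the same template already used for the union (Theorem 3.9) and the complement (Lemma 3.3). First I would dispose of the inclusion $S(A\cap B)\subseteq S(A)\cap S(B)$ purely by monotonicity: since $A\cap B\subseteq A$ and $A\cap B\subseteq B$, Lemma 3.2 applied to each gives $S(A\cap B)\subseteq S(A)$ and $S(A\cap B)\subseteq S(B)$ simultaneously, and intersecting these two conclusions yields the claim. This half requires nothing beyond Lemma 3.2 and is essentially mechanical.

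For the reverse inclusion $S(A)\cap S(B)\subseteq S(A\cap B)$ I would take an arbitrary $S(x)\in S(A)\cap S(B)$, read off from the defining description $S(A)=\{\,S(x)\mid x\in A\,\}$ that $x\in A$ and $x\in B$, conclude $x\in A\cap B$, and hence $S(x)\in S(A\cap B)$. The hard part is exactly this last direction: deducing ``$S(x)\in S(A)$ forces $x\in A$'' silently presumes that the representing word is recoverable, i.e. that the search function $S$ is injective on $W$ (distinct keywords do not collapse to the same search space). This is the only genuine obstacle, and it is the same implicit convention that already underlies the proof of Lemma 3.3; I would either state it as a standing assumption on the preorder big data system or argue that it is built into the identification of a word with its search space.

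To sidestep re-arguing that subtlety, the cleaner route I would actually present derives the result formally from the two preceding set-level facts via De Morgan's law, so that no assumption beyond those already in force is introduced. Writing $A\cap B=(A^{c}\cup B^{c})^{c}$ and chaining Lemma 3.3 (complement) with Theorem 3.9 (union) gives
\[
S(A\cap B)=S\bigl((A^{c}\cup B^{c})^{c}\bigr)=\bigl(S(A^{c}\cup B^{c})\bigr)^{c}=\bigl(S(A^{c})\cup S(B^{c})\bigr)^{c}=\bigl(S(A)^{c}\cup S(B)^{c}\bigr)^{c}=S(A)\cap S(B).
\]
Each equality is a direct citation: the first and last are De Morgan, the second and fourth are Lemma 3.3 (used for the sets $A^{c}\cup B^{c}$, then $A$ and $B$ separately), and the middle one is Theorem 3.9. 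This argument inherits whatever ambient-universe convention fixes the meaning of $S(\cdot)^{c}$ in Lemma 3.3, so the injectivity concern is quarantined there rather than re-opened here; I would therefore lead with the double-inclusion proof for concreteness and remark that the De Morgan chain gives the same conclusion more economically.
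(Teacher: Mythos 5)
Your primary argument is essentially identical to the paper's proof: the forward inclusion via $A\cap B\subseteq A$, $A\cap B\subseteq B$ and Lemma 3.2, and the reverse inclusion by taking $S(x)\in S(A)\cap S(B)$ and reading off $x\in A$ and $x\in B$ from the definition of the search space, exactly as the paper does. Your two additions are both sound and worth noting. First, the injectivity caveat is a fair observation about the paper itself: the step ``$S(x)\in S(A)$ forces $x\in A$'' does silently assume distinct words cannot share a search space (otherwise $S(x)=S(y)$ with $y\in A$, $x\notin A$ would break the deduction), and the paper's proofs of Lemma 3.3 and this theorem both use that convention without stating it — so this is a critique of the paper's rigor, not a gap in your proposal. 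Second, your De Morgan chain $S(A\cap B)=S\bigl((A^{c}\cup B^{c})^{c}\bigr)=\bigl(S(A^{c})\cup S(B^{c})\bigr)^{c}=\bigl(S(A)^{c}\cup S(B)^{c}\bigr)^{c}=S(A)\cap S(B)$ is a correct alternative the paper does not give, and it is honest of you to note that it merely quarantines rather than eliminates the injectivity issue, since the proof of Lemma 3.3 relies on the same convention. What the paper's (and your leading) direct argument buys is self-containedness; what your De Morgan route buys is economy and a clean audit trail of which prior results carry the load.
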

\begin{proof}
    We know that $A\cap B\subseteq A$ and $A\cap B\subseteq B$. So, by lemma 5.2, we have $S(A\cap B)\subseteq S(A)$ and $S(A\cap B)\subseteq S(B)$. Thus, $S(A\cap B)\subseteq S(A)\cap S(B).$ Again, let $S(x)\in S(A)\cap S(B) $. Then, $S(x)\in S(A)$ and $S(x)\in S(B)$. So, by the definitions of search space of $A, B$, we have $x\in A$ and $x\in B$. Thus, $x\in A\cap B$. It implies $S(x)\in S(A\cap B)$. So, $S(A)\cap S(B)\subseteq S(A\cap B).$ Hence, $S(A\cap B)=S(A)\cap S(B).$ \par 
\end{proof}
\begin{remark}
    We can extend theorems 5.8 and 5.9 for an arbitrary family of subsets $\{\, A_{\alpha}\mid \alpha\in \Delta,  A_{\alpha}\in W\,\}$, where $\Delta$ is an index set, as $S(\cup_{\alpha\in \Delta}A_{\alpha})=\cup_{\alpha\in \Delta}S(A_{\alpha})$ and $S(\cap_{\alpha\in \Delta}A_{\alpha})=\cap_{\alpha\in \Delta}S(A_{\alpha})$
\end{remark}

\begin{corollary}
In $(B,W,S,\mathbf{R})$, let $A$ and $B$ be any two subsets of $W$. Then, the following results hold: \begin{enumerate}[(i)]
    \item $S((A\cup B)^{c})=S(A)^{c}\cap S(B)^{c},$
\item $S((A\cap B)^{c})=S(A)^{c}\cup S(B)^{c}.$
\end{enumerate}    
\end{corollary}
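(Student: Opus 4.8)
The plan is to reduce both identities to the already-established behavior of the search function $S$ under unions, intersections, and complements, invoking the classical De Morgan laws at the level of subsets of $W$. The three ingredients I would chain together are Lemma 3.3 (namely $S(A^{c})=S(A)^{c}$), Theorem 3.9 ($S(A\cup B)=S(A)\cup S(B)$), and Theorem 3.10 ($S(A\cap B)=S(A)\cap S(B)$).

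For part (i), I would first rewrite the argument of $S$ using the set-theoretic De Morgan law $(A\cup B)^{c}=A^{c}\cap B^{c}$, so that $S((A\cup B)^{c})=S(A^{c}\cap B^{c})$. Next I would apply Theorem 3.10 to distribute $S$ over the intersection, obtaining $S(A^{c})\cap S(B^{c})$. Finally, Lemma 3.3 converts each complemented argument into a complemented search space, yielding $S(A)^{c}\cap S(B)^{c}$, which is exactly the required right-hand side.

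Part (ii) proceeds dually. I would start from $(A\cap B)^{c}=A^{c}\cup B^{c}$, so that $S((A\cap B)^{c})=S(A^{c}\cup B^{c})$, then use Theorem 3.9 to distribute $S$ over the union to get $S(A^{c})\cup S(B^{c})$, and close with Lemma 3.3 to reach $S(A)^{c}\cup S(B)^{c}$.

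I expect no genuine obstacle here, since each part is a direct three-step chaining of established facts. The one place that calls for care is keeping the two levels of complementation distinct: the De Morgan steps operate on subsets of the word set $W$, whereas Lemma 3.3 is precisely the bridge that relates complementation taken inside $W$ to complementation taken on the corresponding collection of search spaces. Making sure the appropriate theorem (3.9 versus 3.10) is paired with the matching De Morgan law is the only point at which a slip could occur.
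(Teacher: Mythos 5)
Your proposal is correct and follows essentially the same route as the paper: apply the set-theoretic De Morgan laws inside $W$, then distribute $S$ over the resulting intersection or union, and finish with $S(A^{c})=S(A)^{c}$. In fact you spell out the intermediate step $S(A^{c}\cap B^{c})=S(A^{c})\cap S(B^{c})$ (via Theorem 3.10, resp.\ Theorem 3.9) a little more explicitly than the paper, which compresses it into a single equality.
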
  
        \begin{proof}
        From De Morgan's laws, we have $(A\cup B)^{c}=A^{c}\cap B^{c}$ and $(A\cap B)^{c}=A^{c}\cup B^{c}$.\par So, we have \begin{enumerate}[(i)]
            \item $S((A\cup B)^{c})=S(A^{c}\cap B^{c})=S(A)^{c}\cap S(B)^{c}$,
            \item  $S((A\cap B)^{c})=S(A^{c}\cup B^{c})=S(A)^{c}\cup S(B)^{c}$.
        \end{enumerate} 
    \end{proof}
\section{Graph-based approaches for analyzing neighborhood structure of large data set :} 
This section discusses mathematical ideas related to big data searching via graphs. A graph G is a tuple $(V, E)$ that consists of a finite set V of
vertices and a finite set E of edges; each edge is the representation of a pair of vertices\cite{43}.  In $(B, W, S, \mathbf{R})$, we can discuss a graph structure as follows: let $V\subseteq W $ be a set of nodes and $E=\{\,(x,y)\in V\times V\mid S(y)\subseteq S(x)\,\}$ be the set of edges, i.e., if $e$ is an edge between $x$ and $y$ in V, then it can be defined as the ordered pair $(x, y)$ such that $S(y)\subseteq S(x).$ 

   \begin{example}
 If we search for the word `Space', then among the various results, let us choose choose $V=\{\,$ Space.com, Outer Space, Space News\,\}. Similarly, we can choose $E$=\{\,(Space, Space.com), (Space, Outer Space), (Space, Space News)\,\}
   \end{example} 
    \begin{figure}[H]
     \centering
    \includegraphics[width=5in]{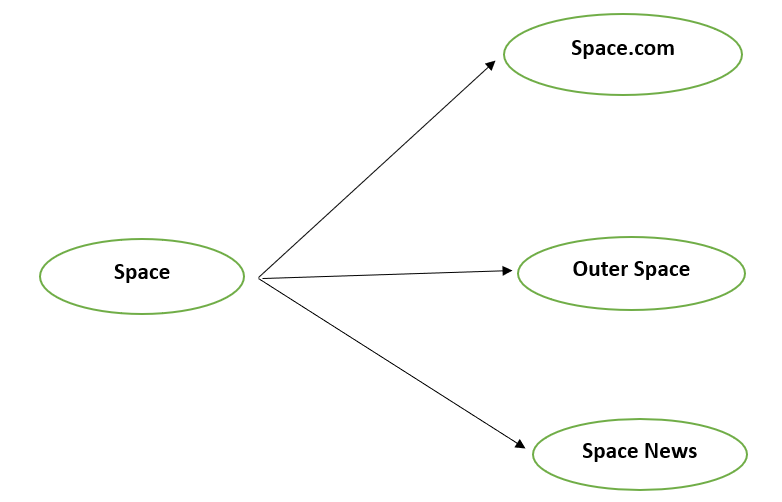}
         \caption{Graph of a part of search result for the word `Space' in Google.}
         \label{fig:Graph of a part of search result for the word `Space' in Google.}
     \end{figure}
 In some cases, as shown in example 6.2, two nodes $x,y\in V$ are not connected directly. There exists a node $y_{1}$ such that $S(y)\subseteq S(y_{1})\subseteq S(x). $ Then, edges are from $x$ to $y_{1}$ and then $y_{1}$ to $y$. This type of graph can be called a 2-steps graph. It is shown in figure 4. Similarly, we may have an \textit{m}-steps graph structure in $(B,W,S,\mathbf{R}^{m})$.
 \begin{example}
  If we search `Big' in Google, we get 17,750,00,000 results (retrieved on 25.08.23). Among them, we consider a finite set $V$=\{\, Big, Big Data, Big Movie, Big Architecture, Big Data Analytics, Big Movie Review, Big Architecture Project\,\} as a set of nodes. Since \textit{S(Big Data)} $ \subseteq $ 
\textit{S(Big), S(Big Movie)} $\subseteq$ 
 \textit{S(Big), S(Big Architecture)}  $\subseteq $ \textit{S(Big), and S(Big Data Analytics)} $\subseteq$ \textit{S(Big Data), S(Big Movie Review)} $\subseteq$ \textit{S(Big Movie), S(Big Architecture Project)} $\subseteq$ \textit{S(Big Architecture)}, so the edge set is $E$=\{\ (Big, Big Data), (Big, Big Movie), (Big, Big Architecture), (Big Data, Big Data Analytics), (Big Architecture, Big Architecture Project), (Big Movie Review, Big Movie)\,\}.\\
\end{example} 

\begin{figure}[H]
     \centering
    \includegraphics[width=5.5in]{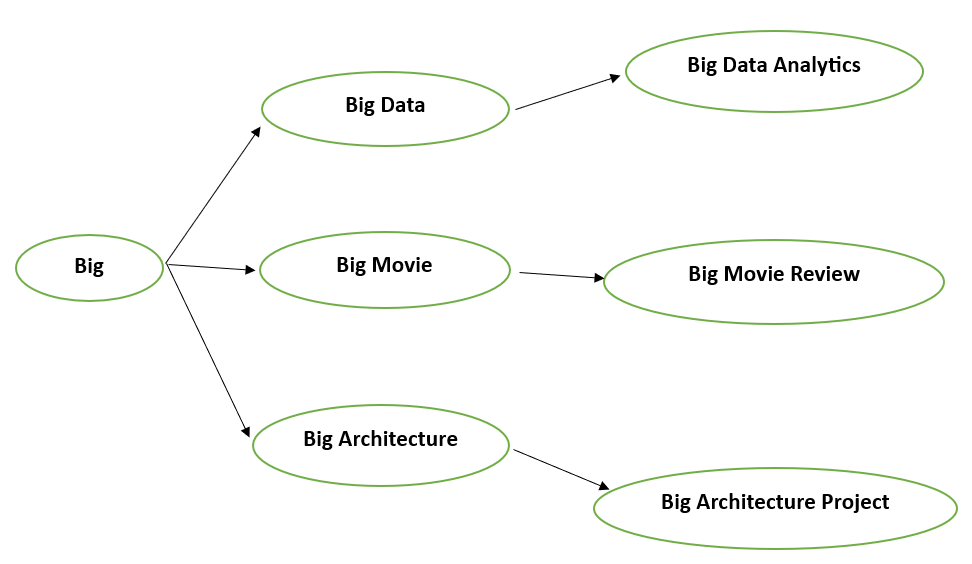}
         \caption{2-steps graph of a part of search result for the word `Big' in Google.}
         \label{fig:2-step graph for Search result for "big" in Google}
     \end{figure}
    In figure 4, it can be observed that all nodes except \textit{`Big'} can be traversed from \textit{`Big'}, but there is no node that can be traversed back to the node \textit{`Big'}. A similar case can be observed in figure 3.

\begin{definition}
    In a graph $G$ = $(V, E)$, if there is a node $x$ such that each node in $V$ can be traversed from the node $x$ but there does not exist any node that can be traversed back to the node $x$; then we call the node $x$ as the atom of the graph, and the graph itself will be coined as a data- directed graph (DDG). Here, we denote the data-directed graph as $ G^{'} = (V_{x}(G), E(G))$.
\end{definition}
\begin{example}
 Graphs in figures 3 and 4 are data-directed graphs, where the nodes \textit{`Space'} and \textit{`Big'} are atoms, respectively.
\end{example}

 If we search for $x\in W$ on Google or any other search engine, then we assume that there exist $y_{1},y_{2},y_{3},...,y_{m}$ in $W$, such that $S(y_{i})\subseteq S(x),\forall i=1,2....,m$. Thus, there is an edge from $x$ to each $y_{i}.$ Later, if we search each $y_{i}$ again, for each $i$, then there are $y_{ip}\in W$, $\forall p=1,2,...,a$ such that $S(y_{ip})\subseteq S(y_{i})\subseteq S(x).$ Thus, there are edges from each $y_{i}$ to $y_{ip}.$ By repeating the same process, we get a graph structure, and it is shown in figure 5.\\

From the available concepts in graph theory, it is evident that a directed graph cannot have a loop at any node. However, in practice, if we search for the word `Big' among 2,394,000,000 results (retrieved on 04.09.2023), we also get information about `Big'. This is shown in figure 6. Thus, $S(Big) \subseteq S(Big)$. In a data-directed graph $G{'}=(V_{x}(G), E(G)$), we have $S(x)\subseteq S(x)$, which implies that there is a loop at node $x$. Also, Imrich and Petrin \cite{44} gave the idea of a directed graph with loops. So, we propose some results related to this idea below.\\

\begin{figure}[H]
     \includegraphics[width=5.9in]{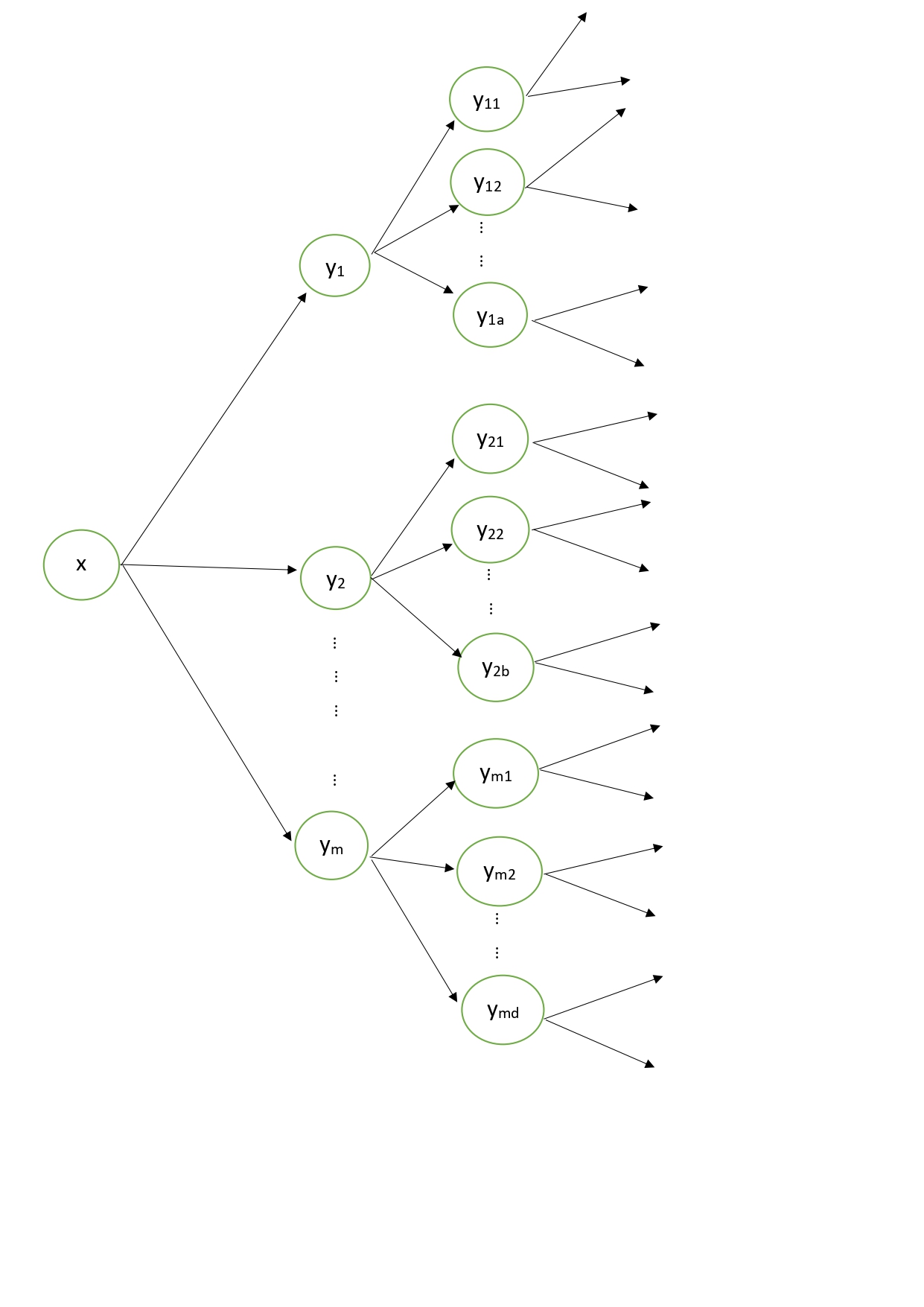}
    \caption{Data-directed graph with atom at $x$.}
    \label{fig:Data directed graph with atom at $x$. }  
     \end{figure}
 
 \begin{figure}[H]
     \centering
     \includegraphics[width= 6.5in]{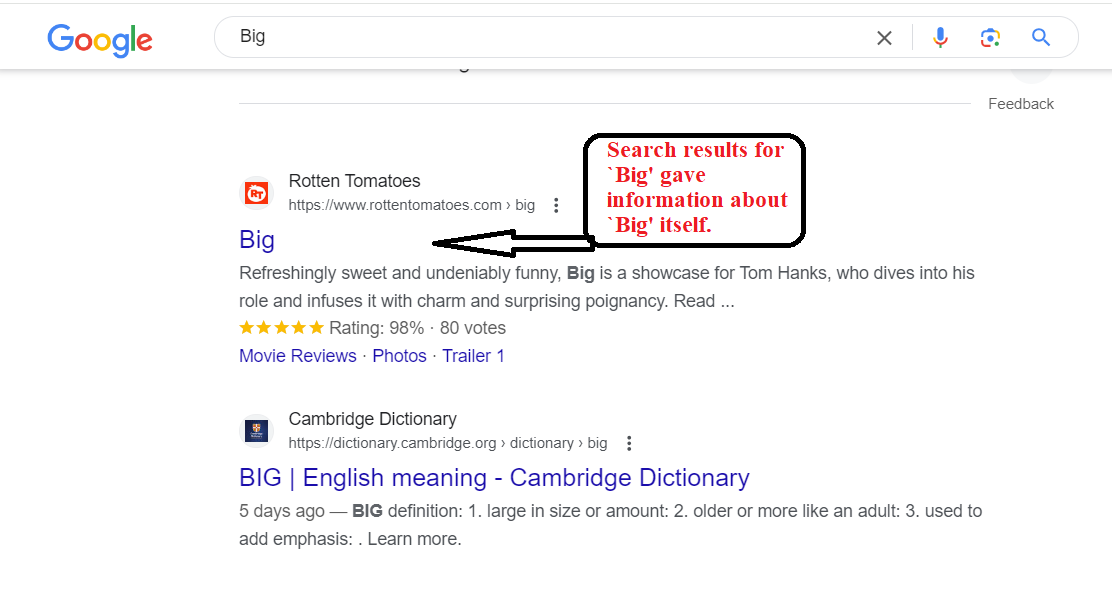}
     \caption{Search result of `Big' displaying the information for `Big' itself.}.
     \label{fig:bg2.png}
 \end{figure}
 \begin{definition}
      A directed graph is called a loop-directed graph if there are loops in some nodes.  
 \end{definition}
 \begin{proposition}
      The graph $G^{'} = (V_{x}(G), E(G))$ is a loop-directed graph.
 \end{proposition}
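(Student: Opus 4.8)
The plan is to unwind the two relevant definitions and reduce the claim to the reflexivity of set inclusion. Recall that in the data directed graph $G^{'}=(V_{x}(G),E(G))$ the edge set is $E(G)=\{\,(u,v)\in V_{x}(G)\times V_{x}(G)\mid S(v)\subseteq S(u)\,\}$, so that a loop at a node $u$ is present precisely when the ordered pair $(u,u)$ belongs to $E(G)$. By the defining condition of $E(G)$, this happens if and only if $S(u)\subseteq S(u)$. Hence the entire proposition collapses to checking this single inclusion, which is immediate.

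First I would invoke the hypothesis that $G^{'}$ is a data directed graph; by the definition of a data directed graph this guarantees the existence of its atom $x\in V_{x}(G)$, and in particular $V_{x}(G)$ is non-empty. Next, for this node $x$ I would observe that $S(x)\subseteq S(x)$ holds trivially, since every set is contained in itself. By the definition of $E(G)$ this yields $(x,x)\in E(G)$, that is, there is a loop at the node $x$. Since at least one node (indeed the atom) carries a loop, the defining condition of a loop-directed graph is met, and therefore $G^{'}$ is a loop-directed graph.

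This argument has essentially no technical obstacle; the only point worth stressing is conceptual rather than computational. A priori, classical graph theory forbids loops in a directed graph, as the paragraph preceding the statement notes. The resolution is that the edges of $G^{'}$ are generated by the relation $S(v)\subseteq S(u)$, and this relation is reflexive, so the self-pair $(u,u)$ is automatically admitted at every node. I would close by remarking that the conclusion is in fact stronger than strictly required: a loop occurs not merely at some node but at every node of $V_{x}(G)$, in agreement with the empirical observation in Fig.~6 that searching for a word also returns the word itself, i.e. $S(x)\subseteq S(x)$.
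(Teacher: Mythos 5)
Your proof is correct, and its core step---reflexivity of set inclusion puts the pair $(u,u)$ in $E(G)$---is exactly the paper's key observation. Where you genuinely diverge is in how the ``directed graph'' half of the definition is handled. The paper first argues that $G^{'}$ is a directed graph by invoking the claim that for any two nodes $z,y\in V_{x}(G)$ either $S(z)\subseteq S(y)$ or $S(y)\subseteq S(z)$ (a comparability statement it says was proved earlier), and only then notes $S(z)\subseteq S(z)$ for some $z$ to conclude loops exist. You instead treat directedness as holding by construction, since $E(G)$ is defined as a set of ordered pairs satisfying $S(v)\subseteq S(u)$, and you spend your effort on the loop condition, instantiating it at the atom $x$ whose existence the definition of a data directed graph guarantees (which also settles non-emptiness of $V_{x}(G)$, a point the paper leaves implicit). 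Your route is the more robust one: comparability of arbitrary search spaces is dubious in general---two unrelated keywords need not have nested search spaces---and nothing in the notion of a directed graph actually requires every pair of nodes to be joined, so you avoid a step in the paper's proof that is at best unnecessary and at worst false as a characterization of directedness. Your closing remark that a loop in fact sits at \emph{every} node, not merely at some vertex as the paper states, is a slightly stronger conclusion obtained from the same reflexivity, and it is consistent with the paper's own discussion surrounding Fig.~6.
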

       \begin{proof}
 We know that a graph $G^{'} = (V_{x}(G), E(G))$ is a directed graph if and only if for any two nodes $z,y\in V_{x}(G)$, either $(z,y)\in E(G)$ or $(y,z)\in E(G).$ In $(B, W,S, \mathbf{R}^{m})$, we proved earlier that either $S(z)\subseteq S(y)$ or $S(y)\subseteq S(z)$. Again, for some $z\in V_{x}(G),$ we have $ S(z)\subseteq S(z),$ thus there exist loops in some vertex also. Hence, $G^{'} = (V_{x}(G), E(G))$ is a loop-directed graph.
\end{proof}
\begin{remark}
   The graph $G^{'} = (V_{x}(G), E(G))$ is not a tree as there may be some cycles, since for any $y\in V_{x}(G)$ there exist $z_{1},z_{2},..,z_{k-1}$ in $W$ such that $S(y)\subseteq S(z_{k-1})\subseteq S(z_{k-2})\subseteq...\subseteq S(z_{2})\subseteq S(z_{1})\subseteq S(x).$ That is $x\mathbf{R}^{k}y$, but we proved earlier that $\mathbf{R}^{k}$ is reflexive, so $x\mathbf{R}^{k}x.$ Hence, there is a cycle $xe_{1}z_{1}e_{2}z_{2}e_{3}z_{3}...e_{k-1}z_{k-1}e_{k}x$.
\end{remark}
  
\noindent
In $G^{'} = (V_{x}(G), E(G))$, if $(z,y)\in E(G)$, then $y$ is a neighbor of $z$, and the collection of such neighbors is called the neighborhood of $z$. In \cite{40}, Yao gave an idea of the distance function between two nodes in a graph. We discuss the notion of distance between nodes in DDG. \\

\begin{definition}
Let $d: V_{x}\times V_{x}\rightarrow N$ such that for any two nodes $z,y\in V$, $d(z,y)=k \iff$ there exist $z_{1},z_{2},...,z_{k-1},\in W$ such that $S(y)\subseteq S(z_{k-1})\subseteq S(z_{k-2})\subseteq...\subseteq S(z_{1})\subseteq S(z).$ 
\end{definition}

\section{Anomaly detection in big data searching :} If we search the word `pet' on Google, then we get almost 6,94,00,00,000 results (retrieved on 01.10.2023). Here, we observe that almost all the data related to pets provide information about animals or are related to animals in some way. However, among these results, there are a few that are unrelated, such as 'Polyethylene Terephthalate' (see figure 7) or 'Positron Emission Tomography scan' (see figure 8), which completely deviate from the information provided by the majority of the data. This type of data may be considered an anomaly for someone seeking information on pets, animals, and related topics. Thus, we study anomaly detection in big data search using the Jaccard similarity coefficient. Let $A$ and $B$ be any two sets. Then, the Jaccard similarity coefficient \cite{45} is defined as $J(A,B)=\frac{|A\cap B|}{|A\cup B|}$. It is evident from \cite{45} that numerically, $ 0\leq J(A, B)\leq1 .$ \\

\textbf{Step 1:}
Let us search a keyword $x$ on Google or any other search engine, and we obtain the search space of $x$ to be $S(x)=\{\,D_{1},D_{2},...,D_{m}\,\}$. Let $list_{i}, list_{j}$ be two sets of words in the data $D_{i}, D_{j}$ respectively. Then, find the Jaccard similarity coefficients of $list_{i}$ and $list_{j}$.

\textbf{Step 2:}
In this step, we consider a threshold value, say $\delta$, where 0$<\delta <$ 1, and then create a list, say $W_{0}$, containing reference keywords. Reference keywords are those that a user searches for in a search engine to obtain their desired data, denoted as $D_{0}$. \\

Next, we find a $\delta$-similarity neighborhood of $D_{0}$, $N_{\delta}(D_{0})=\{\, D_{i}\in S(x)\mid x\in W_{0}, J(W_{0}, W_{i})> \delta \,\}$. Here, $W_{i}$ denotes the list of words of $D_{i}$. If any data fails to be contained in $N_{\delta}(D_{0})$, then it is considered an anomaly. \\

In the above step, we may find some data that are semantically similar but not included in the similarity neighborhood. In such cases, we can take an iterative approach to detect the ultimate anomaly.\\

\textbf{Step 3:} In this step, we first construct the $\delta$-similarity neighborhood of $D_{0}$ with a suitable value of $\delta.$ Suppose there are $m-$ anomalies, say $D^{'}_{1}, D^{'}_{2},...,D^{'}_{m}$. Then, we check $J(W_{0},W^{'}_{i}), i=1,2,...,m$ and take an average of them. Let it be $\delta_{1}$. Now, we construct  a $\delta_{1}$-similarity neighborhood of $D_{0}$ and find anomalies. Thus, repeating the process up to a finite number of times, we will get data $D_{k}^{'}$, for some $k$, for which $J(W_{0}, W^{'}_{k})$ will tend to zero, and in that case $D_{k}^{'}$ will be the ultimate anomaly, where $W^{'}_{k}$ denotes the list of words of $D_{k}^{'}$.\\

Since big data is characterized by the 5 V's—velocity, value, volume, veracity, and variety—the dynamic nature of these features, along with the temporal dimension, poses challenges in presenting concrete examples based on the proposed anomaly detection algorithm. Nevertheless, we provide the Python implementation of our anomaly detection algorithm below. The code has been developed using Python version 3.12.4.\\

 \noindent
 
{\bf Code in Python 3.12.4. :}\\

\noindent
from pyspark import SparkContext, SparkConf\\
from pyspark.sql import SparkSession\\
from pyspark.sql.functions import col, udf\\
from pyspark.sql.types import DoubleType, ArrayType, StringType\\

conf = SparkConf().setAppName("JaccardSimilarity")\\.setMaster("local")\\
sc = SparkContext(conf=conf)\\
spark = SparkSession(sc)\\
data = [
    (0, [ `machine', `learning', `basics']),\\
    (1, [ `deep', `learning', `neural', `networks']),\\
    (2, [ `machine', `learning', `advanced']),\\
    (3, [ `statistics', `data', `analysis']),\\
    (4, [`science', `data', `visualization'])\\
]\\
df = spark.createDataFrame(data, [``id'', ``words''])\\
def jaccard$\_$similarity(list1, list2):\\
    set1, set2 = set(list1), set(list2)\\
    intersection = len(set1.intersection(set2))\\
    union = len(set1.union(set2))\\
    return float(intersection) / union\\
jaccard$\_$udf = udf(jaccard$\_$similarity, DoubleType())\\
reference$\_$doc = [`machine', `learning', `basics']\\
reference$\_$keywords = [`data', `science', `machine', `learning']\\
delta = 0.4\\
broadcast$\_$ref$\_$doc = sc.broadcast(reference$\_$doc)\\
broadcast$\_$ref$\_$keywords = sc.broadcast(reference$\_$keywords)\\
df = df.withColumn(``similarity'', jaccard$\_$udf(col(``words''), spark.create\\DataFrame([(reference$\_$keywords,)], [``words'']).\\select(``words'').first().words))\\
neighborhood$\_$df = df.filter(col(``similarity") > delta)\\
anomalies$\_$df = df.filter(col(``similarity") <= delta)\\
max$\_$iterations = 10\\
current$\_$delta = delta\\
for $\_$in range(max$\_$iterations):\\
    if anomalies$\_$df.count() == 0:\\
        break\\
  avg$\_$similarity = anomalies$\_$df.agg({``similarity":\\ ``avg"}).collect()[0][0]\\
    current$\_$delta = avg$\_$similarity\\
    neighborhood$\_$df = df.filter(col(``similarity") > current$\_$delta)\\
    anomalies$\_$df = df.filter(col(``similarity") <= current$\_$delta)\\
ultimate$\_$anomalies = anomalies$\_$df.collect()\\
print(``Ultimate Anomalies:", [row.words for row in \\ultimate$\_$anomalies])\\
print(``Final Delta:", current$\_$delta)\\
sc.stop()\\

As a case study for our aforementioned anomaly detection algorithm in big data searching, we provide Python code of a case study. This case study is based on customer reviews from an e-commerce platform. For this purpose, we assume that the coder has set up HDFS and that the data is available at `hdfs://path/to/customer/reviews'. Below is the case study-based Python code along with the scenario of the case study:\\

\noindent
\textbf{Scenario}\\
we want to identify anomalous customer reviews in a large dataset from an e-commerce platform. This can help in detecting fake reviews or unusual patterns in the reviews.\\
\noindent
\textbf{Dataset}\\
We assume that one has a dataset of customer reviews stored in HDFS. Each review consists of an ID and a list of words.\\
\noindent
 \textbf{Step 1: Setup Spark and Load Data:} 

    \begin{enumerate}[(a)]
    \item Initialize Spark.
    \item Load the reviews from HDFS.
    \item Broadcast Reference Data:
\end{enumerate}
\noindent

\noindent
\textbf{Step 2: Broadcast Reference Data:}\\
Define and broadcast the reference data and keywords.\\
\noindent
\textbf{Step 3: Calculate Initial Similarity:}\\
Calculate the Jaccard similarity between each review and the reference keywords.\\
\noindent
\textbf{Step 4: Filter Initial Neighborhood and Anomalies: }\\
Identify the $\delta$-similarity neighborhood and anomalies based on the initial delta.\\
\noindent
\textbf{Step 5: Iteratively Adjust Delta: }\\
 Adjust delta based on the average similarity of anomalies and repeat the filtering process.\\
\noindent

Now, we discuss the Python code of the above-mentioned case study.\\

\noindent
from pyspark import SparkContext, SparkConf\\
from pyspark.sql import SparkSession\\
from pyspark.sql.functions import col, udf\\
from pyspark.sql.types import DoubleType\\

\noindent
conf = SparkConf().setAppName("NewsAnomaly\\Detection").setMaster("local[*]")
\\
sc = SparkContext(conf=conf)\\
spark = SparkSession(sc)\\

\noindent
df=spark.read.json("hdfs://path/to/news/articles")\\
\noindent
def jaccard$\_$similarity(list1, list2):\\
    set1, set2 = set(list1), set(list2)\\
    intersection = len(set1.intersection(set2))\\
    union = len(set1.union(set2))\\
    return float(intersection) / union\\
\noindent
jaccard$\_$udf=udf(jaccard$\_$similarity,DoubleType())\\
reference$\_$doc = [`breaking', `news', `headline']\\
reference$\_$keywords = [`breaking', `news', `headline', `today']\\
delta = 0.4\\
broadcast$\_$ref$\_$doc = sc.broadcast(reference$\_$doc)\\
broadcast$\_$ref$\_$keywords = sc.broadcast(reference$\_$keywords)\\
df = df.withColumn("similarity", jaccard$\_$udf(col("words"), \\spark.createDataFrame([(reference$\_$keywords,)], ["words"]).\\select("words").first().words))\\
neighborhood$\_$df = df.filter(col("similarity") > delta)\\
anomalies$\_$df = df.filter(col("similarity") <= delta)\\
max$\_$iterations = 10\\
current$\_$delta = delta\\
for $\_$ in range(max$\_$iterations):\\
    if anomalies$\_$df.count() == 0:\\
        break\\
avg$\_$similarity=anomalies$\_$df.agg({"similarity": "avg"}).\\collect()[0][0]\\
    current$\_$delta = avg$\_$similarity\\
neighborhood\_df = df.filter(col("similarity") > current\_delta)\\
anomalies\_df = df.filter(col("similarity") <= current\_delta)\\
ultimate\_anomalies = anomalies\_df.collect()\\
print("Ultimate Anomalies:", [row['words'] for \\row in ultimate\_anomalies])
print("Final Delta:", current\_delta)\\
sc.stop()\\
\noindent
Now, we also observe an interesting fact that is given below. \\

Let $\mathcal{A}_{1}=\{\, D_{i}\in S(x)\mid J(W_{0},W_{i})\leq \delta_{0}\,\}$ represents the initial set of anomalies. Sometimes, anomalies in the initial set may not be satisfactory due to various reasons, such as semantic differences or the possibility that they represent a similar context to $D_{0}.$ So, we refine our initial threshold value $\delta_{0} $ by using following equation: $$\delta_{1}=\frac{\sum_{D_{i}\in \mathcal{A}_{1}} J(W_{0},W_{i}) }{|\mathcal{A}_{1}|}$$\\ Then our refined set of anomalies becomes $\mathcal{A}_{2}=\{\,D_{i}\mid J(W_{0},W_{i})\leq \delta_{1}\,\}.$\\We repeat this anomaly detection by using updated threshold value by using following iteration equation:\\ $$\mathcal{A}_{n+1}=\{\, D_{i}\in S(x)\mid J(W_{0},W_{i})\leq \delta_{n}\,\}$$ and , $$\delta_{n+1}=\frac{\sum_{D_{i}\in \mathcal{A}_{n+1}} J(W_{0},W_{i}) }{|\mathcal{A}_{n+1}|}$$ here, $$n=0,1,2,......$$\\ Finally, we stop our iteration when $\delta_{n+1}\approx \delta_{n}$ or $J(W_{0},W^{'}_{k})\longrightarrow 0$, for some, $D^{'}_{k}\in \mathcal{A}_{n+2}.$   \\

We call the anomaly detection method described above the $\delta$-method. The value of $\delta$ lies in the interval (0, 1). The initial choice of $\delta$ depends on the choices of the users. Thus, we do not suggest any parameters specifically. Moreover, to avoid arbitrary selection, we propose the iterative refinement of the value of $\delta$, where $\delta$ is dynamically adjusted based on the average similarity of previously identified anomalies. \\

\begin{lemma}
      The collection $\{\, N_{\delta}(D_{0})\mid 0<\delta<1$\,\} is nested.
  \end{lemma}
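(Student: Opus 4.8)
The plan is to show that the collection is totally ordered by set inclusion, which is exactly what \emph{nested} means: for any two members of the family we must exhibit an inclusion in one direction or the other. First I would fix two thresholds $\delta_{1},\delta_{2}\in(0,1)$ and, without loss of generality, assume $\delta_{1}\leq\delta_{2}$. The entire argument then rests on the monotonicity of the defining inequality: membership of a document $D_{i}$ in $N_{\delta}(D_{0})$ is governed by the single strict condition $J(W_{0},W_{i})>\delta$, and raising the threshold $\delta$ can only make this condition harder to satisfy.

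Concretely, I would take an arbitrary $D_{i}\in N_{\delta_{2}}(D_{0})$. By the definition of the $\delta$-similarity neighborhood this means $J(W_{0},W_{i})>\delta_{2}$. Since $\delta_{1}\leq\delta_{2}$, we immediately obtain $J(W_{0},W_{i})>\delta_{1}$, so $D_{i}\in N_{\delta_{1}}(D_{0})$. Hence $N_{\delta_{2}}(D_{0})\subseteq N_{\delta_{1}}(D_{0})$ whenever $\delta_{1}\leq\delta_{2}$. Because any two thresholds in $(0,1)$ are comparable as real numbers, any two members of the collection are comparable under inclusion, and this is precisely the assertion that the family is nested.

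There is no genuine obstacle here: the statement is a direct consequence of the fact that the Jaccard coefficient $J(W_{0},W_{i})$ is a fixed number attached to each document $D_{i}$ and does not itself depend on $\delta$, so the neighborhoods shrink monotonically as $\delta$ increases. The only point worth recording for completeness is the boundary case $\delta_{1}=\delta_{2}$, in which the two neighborhoods coincide and the inclusion becomes an equality; this remains consistent with nestedness. I would finish by noting the resulting descending-chain structure, $N_{\delta}(D_{0})\supseteq N_{\delta'}(D_{0})$ for all $\delta\leq\delta'$ in $(0,1)$, which makes transparent how the family contracts toward the reference document $D_{0}$ as the similarity requirement is tightened.
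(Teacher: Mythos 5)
Your proposal is correct and follows essentially the same route as the paper's proof: both rest on the observation that $J(W_{0},W_{i})>\delta_{2}$ together with $\delta_{1}\leq\delta_{2}$ forces $J(W_{0},W_{i})>\delta_{1}$, giving $N_{\delta_{2}}(D_{0})\subseteq N_{\delta_{1}}(D_{0})$ and hence comparability of any two members of the family. The paper merely adds an explicit finite descending chain $N_{\delta_{n}}(D_{0})\subseteq\cdots\subseteq N_{\delta_{1}}(D_{0})$ as illustration, which your argument subsumes.
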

\begin{proof}
  Let us consider real numbers $\delta_{1},\delta_{2}$, where $0< \delta_{1}< \delta_{2}< 1$. Then, we get $ J(W_{0}, W_{i})> \delta_{2}$ implies that $ J(W_{0}, W_{i})> \delta_{1}$, where $W_{0},W_{i}$  are two lists of keywords of the data $D_{0}$ and $D_{i}$ respectively. Thus, for any $i$, $D_{i}\in N_{\delta_{2}}(D_{0})$ implies $D_{i}\in N_{\delta_{1}}(D_{0}).$ So, $ N_{\delta_{2}}(D_{0})\subseteq N_{\delta_{1}}(D_{0}).$ In similar manner, for reals $\delta_{1}<\delta_{2}<\delta_{3}<...<\delta_{n}$, we have $ N_{\delta_{n}}(D_{0})\subseteq  N_{\delta_{n-1}}(D_{0})\subseteq N_{\delta_{n-2}}(D_{0})\subseteq...\subseteq N_{\delta_{2}}(D_{0})\subseteq N_{\delta_{1}}(D_{0}).$ Hence, the collection $\{\, N_{\delta}(D_{0})\mid 0<\delta<1$\,\} is nested.
\end{proof}

\begin{figure}[H]
    \centering
     \includegraphics[width=6in]{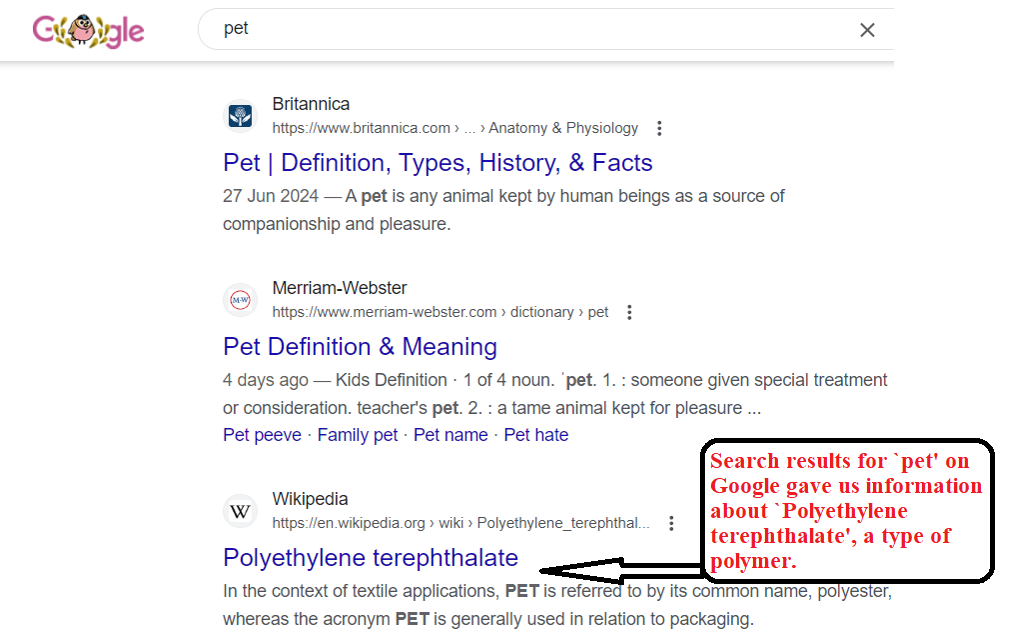}
    \caption{Search result of `pet' on Google containing `Polyethylene terephthalate'. }
    \label{fig: Search result of `pet' on Google containing `Polyethylene terephthalate'.}
 \end{figure}

    \begin{figure}[H]
    \centering
 \includegraphics[width=6in]{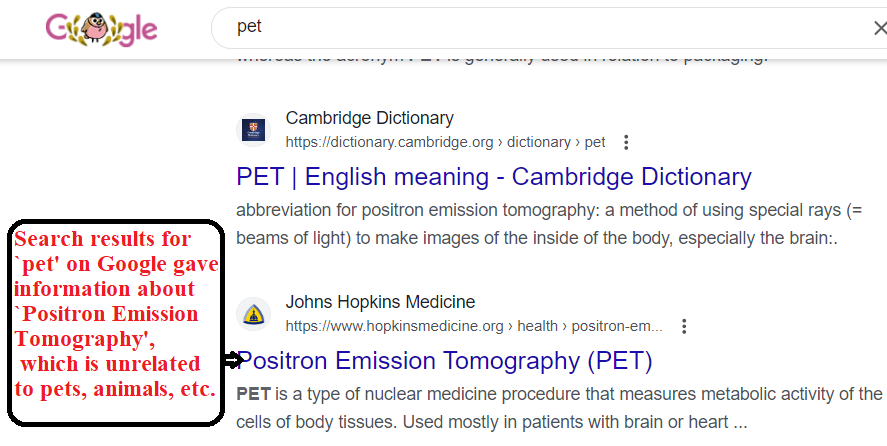 }
 
    \caption{Search result of `pet' on Google containing `Positron emission tomography scan'.}
    \label{fig:Search result of `pet' on Google containing `Positron emission tomography scan'.}
    \end{figure}
    \par

\subsection{Comparison of existing methods for anomaly detection: }

Now, it is important to compare the $\delta$-method, our anomaly detection process described above, with other widely used methods, namely the one-class SVM method \cite{46,47,48} and the isolation forest method \cite{49,50}.\\

\begin{table}[H]
  \centering
  \renewcommand{\arraystretch}{1.6} % Increase row height
  \fontsize{11pt}{13pt}\selectfont  % Set font size
  \begin{adjustbox}{max width=\textwidth}
  \begin{tabular}{|p{0.3\textwidth}|p{0.3\textwidth}|p{0.3\textwidth}|}
    \hline
    \textbf{One-class SVM method} & \textbf{Isolation forest method} & \textbf{$\delta$-method} \\ 
    \hline
    This method mainly deals with  feature vectors, which are numeric in nature \cite{46,48}. While we are to handle text and image data, it requires conversion into numeric format\cite{47}. & The method is not originally intended for text or categorical inputs \cite{50}. It does not model relationships or inclusion among terms, as all the features are treated independently. & It is designed to handle structured textual relations, and it does not require independent numeric transformation of each term. \\
  Since this method relies entirely on numeric feature engineering \cite{46,48}, it can not preserve word ordering or hierarchal structure. & The output is purely based on numeric representation, which lacks word-level interpretability for explaining anomalies. & It is established on a hierarchical search structure, allowing explainability at the word and phrase level. \\ 
    \hline
    This method struggles with context generalization and specification, limiting its effectiveness in varied textual scenarios. & This method does not handle context-based or relational aspects efficiently.  & It can address this using a preorder structure between words.\\ \hline
  \end{tabular}
  \end{adjustbox}
  \caption{Comparison of  the one-class SVM method, isolation forest method, and 
  $\delta$-method.}
  \label{tab:comparison_table}
\end{table}
    
\section{Primal structure in big data search:}
Recently, Acharjee et al. \cite{39} introduced a new notion named `primal' in general topology. Primal is the dual structure of the grill. In this section, we are going to discuss primal structure relating to big data searching in big data analytics.\\
      
From the previous sections, it is clear that in $(B,W,S,\mathbf{R}^{m})$ for a word $x\in W$, its search space $S(x)$ contains all data that contains $x$. For example, the search space $S(World)$
 contains data related to the words
 \textit{`World', `World Health Organization', `World Trade Organization', `World Economic Forum', `World Water Day', `World Map'}, etc. It is noticeable that \textit{S(World Map)} $\subseteq$ \textit{S(World)}, i.e., if we consider $x=$ \textit{`World'}, $y=$ \textit{`Map'}, then $S(x\vee y) \subseteq S(x).$  It is evident that, though in  set theory, a subset of any set contains some of the elements of the set,  in the case of search space $S(x)$, subsets are of the type $S(x\vee y)$. 
\begin{definition}
    Let $B$ be the universe of big data. Then, the collection  $\mathcal{P}\subseteq 2^{B}$ is called a big data primal  if it satisfies the following: 
    \begin{enumerate}[(i)]
        \item $B\notin \mathcal{P},$
        \item if $S(x)\in \mathcal{P} $ and $S(x\vee y)\subseteq S(x),$ then $S(x\vee y)\in\mathcal{P},$
        \item if $S(x)\cap S(y)\in\mathcal{P}$, then $S(x)\in \mathcal{P}$ or $S(y)\in\mathcal{P}.$
    \end{enumerate}
\end{definition}  
  Before going to study big data primal relative to a set of words in the universe of big data $B$, let us consider $M\subseteq W$ be a subset such that  $x, y\in M$ if and only if $x\vee y\in M$. For example, if `Big' and `Data' are in $M$, then `Big Data' is also in $M$ and vice-versa. In $(B, W,S, R)$, we consider a collection $\mathcal{P}_{M}=\{\, S(x)\mid x\in M\,\}.$ In the following part, we discuss that $\mathcal{P}_{M}$ satisfies the definition 8.1 .\\
  
\begin{proposition}
        Let $M\subseteq W$ be any set of words such that  $x, y\in M$ if and only if $x\vee y\in M$. Then, the collection $\mathcal{P}_{M}=\{\, S(x)\mid x\in M\,\}$ is  big data primal (relative to $M$) in the universe of big data $B$.
\end{proposition}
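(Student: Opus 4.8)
The plan is to verify the three axioms of Definition 3.10 directly for $\mathcal{P}_{M}$, with the single engine behind every step being the identity $S(x\vee y)=S(x)\cap S(y)$, which is the two-word instance of Theorem 2.1. This identity is what converts the set-theoretic conditions on search spaces into the purely combinatorial condition ``$x,y\in M \iff x\vee y\in M$'' that defines $M$. At the outset I would record the identification $S(w)\in\mathcal{P}_{M}\iff w\in M$, so that all membership questions about $\mathcal{P}_{M}$ reduce to membership questions about $M$.

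For axiom (i) I would show $B\notin\mathcal{P}_{M}$: each $S(x)$ is the set of documents that actually contain the word $x$, and since the universe $B$ always contains at least one document in which $x$ does not occur, we have $S(x)\subsetneq B$ for every $x\in M$; hence $B$ is not of the form $S(x)$ for any $x\in M$. I would state explicitly the mild hypothesis this uses, namely that no keyword in $M$ appears in every document of $B$.

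I would handle axiom (iii) next, as it is the cleaner of the two remaining and uses the backward direction of the biconditional. Suppose $S(x)\cap S(y)\in\mathcal{P}_{M}$. Rewriting $S(x)\cap S(y)=S(x\vee y)$ gives $x\vee y\in M$, and the biconditional then forces $x\in M$ \emph{and} $y\in M$; in particular $S(x)\in\mathcal{P}_{M}$, which is even stronger than the required disjunction.

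Axiom (ii) is where I expect the main obstacle. Given $S(x)\in\mathcal{P}_{M}$ and $S(x\vee y)\subseteq S(x)$, I must produce $S(x\vee y)\in\mathcal{P}_{M}$, i.e. $x\vee y\in M$. The forward direction of the biconditional delivers $x\vee y\in M$ only when \emph{both} $x\in M$ and $y\in M$; the hypothesis supplies $x\in M$, but the inclusion $S(x\vee y)\subseteq S(x)$ is automatic (it holds for every $y$ by the identity above) and so carries no information about $y$. The argument therefore goes through exactly when the combining word $y$ is itself taken from $M$, which I read as the content of the phrase ``relative to $M$'' in the statement; under that reading $x,y\in M$ yields $x\vee y\in M$ and hence $S(x\vee y)\in\mathcal{P}_{M}$. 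Pinning down the admissible scope of $y$ -- and checking that the identification $S(w)\in\mathcal{P}_{M}\iff w\in M$ is not disturbed by distinct words sharing a search space -- is the delicate part; once the scope is fixed, each axiom is a one-line substitution via Theorem 2.1, with the two directions of the biconditional on $M$ supplying axioms (ii) and (iii) respectively.
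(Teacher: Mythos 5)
Your proposal is correct and follows essentially the same route as the paper: verify the three axioms of Definition 3.10 directly, with the identity $S(x\vee y)=S(x)\cap S(y)$ from Theorem 2.1 doing the work in axiom (iii), and the two directions of the biconditional on $M$ supplying axioms (ii) and (iii) respectively. Your reading of axiom (ii) is exactly the paper's: its proof opens with ``For any $x,y\in M$,'' i.e., it silently restricts the combining word $y$ to $M$, just as you concluded it must for the argument to go through, since the inclusion $S(x\vee y)\subseteq S(x)$ is indeed automatic and carries no information about $y$. The only genuine divergence is axiom (i): the paper argues from finiteness ($W$, hence $M$, is finite for an individual, so it is ``practically impossible'' for the universe $B$ to lie in $\mathcal{P}_M$), whereas you posit that no keyword of $M$ occurs in every document of $B$, so that $S(x)\subsetneq B$ for each $x\in M$; both are extra-mathematical hypotheses, but yours is stated explicitly and is arguably cleaner, since finiteness of $M$ alone does not preclude some single $S(x)$ from equalling $B$. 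Finally, the subtlety you flag about the identification $S(w)\in\mathcal{P}_M\iff w\in M$ --- distinct words possibly sharing a search space --- is real: the paper's step in (iii) from $S(x\vee y)\in\mathcal{P}_M$ to $x\vee y\in M$ uses precisely this implicit injectivity of $S$ without comment, so your caution identifies an unstated assumption of the original proof rather than a defect of your own argument.
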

\begin{proof}
\begin{enumerate}[(i)]
    \item The first condition in the definition of big data primal is obvious. Since the set of words $W$ is always finite for an individual, so its subset $M$ is also finite. Thus, it is practically impossible to have the universe of big data in $\mathcal{P}_{M}$. So, $B\notin\mathcal{P}_{M}.$
\item  For any $x,y\in M$, we have $S(x)\in \mathcal{P}_{M}$ and $S(x\vee y)\subseteq S(x)$. Since $x,y\in M$ implies $x\vee y\in M$. Hence, $S(x\vee y)\in \mathcal{P}_{M}.$
\item Let $S(x)\cap S(y) \in \mathcal{P}_{M}$. To show that either $S(x)\in \mathcal{P}_{M}$ or $ S(y)\in\mathcal{P}_{M}.$ Now, $S(x)\cap S(y)\in\mathcal{P}_{M}$ implies $S(x\vee y)\in \mathcal{P}_{M}$. Thus, we have $x\vee y\in M.$ It implies $x\in M, y\in M$. So, $S(x)\in \mathcal{P}_{M}$ or $S(y)\in \mathcal{P}_{M}$.\\

Since $\mathcal{P}_{M}$ satisfies all the conditions stated in definition 8.1, hence, $\mathcal{P}_{M}$ is a big data primal relative to $M$ in the universe of big data $B.$
\end{enumerate}
    \end{proof}
    \begin{example}
        Let $W$ be the set of words. We define a subset $M \subset W$ as:
\[
M = \{ \text{Big}, \text{Data}, \text{Big Data} \}.
\]
Now, $M$ satisfies the condition
$x, y \in M \iff x \lor y \in M$. 
Each element of  $ M$ represents a valid concept. So,  we can  define the big data primal 
$\mathcal{P}_M = \{ S(x) \mid x \in M \}$. 
Here, $S(x)$ is a subset of big data retrieved by word $x$.
    \end{example}
\begin{theorem}
    Let $M,N\subseteq W$ such that $x,y\in M,N\iff x\vee y\in M,N $. In $(B,W,S,R)$, if $\mathcal{P}_{M}$ and $\mathcal{P}_{N}$ two big data primals relative to $M$ and $N$ respectively, then $\mathcal{P}_{M}\cup \mathcal{P}_{N}$ is a big data primal relative to $M\cup N$ in the universe $B$.
\end{theorem}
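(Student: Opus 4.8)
The plan is to verify the three conditions of Definition 3.10 for the collection $\mathcal{P}_{M}\cup\mathcal{P}_{N}$ directly, leaning on the hypothesis that $\mathcal{P}_{M}$ and $\mathcal{P}_{N}$ are each big data primals. First I would record the set-level identity $\mathcal{P}_{M}\cup\mathcal{P}_{N}=\{\,S(x)\mid x\in M\cup N\,\}$, which is exactly the collection ``relative to $M\cup N$'' named in the statement; this confirms that showing the union satisfies Definition 3.10 is the correct target.

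A natural first instinct is to argue that $M\cup N$ again obeys the closure hypothesis $x,y\in M\cup N\iff x\vee y\in M\cup N$ and then simply invoke Proposition 3.2. I expect this to be the main obstacle, because it genuinely fails: if $x\in M\setminus N$ and $y\in N\setminus M$, then $x,y\in M\cup N$, yet $x\vee y$ need lie in neither $M$ nor $N$, so it may be absent from $M\cup N$. Hence Proposition 3.2 cannot be applied to $M\cup N$, and the union has to be handled through the primal axioms of the two constituent collections rather than through properties of the generating word-set.

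Concretely, for condition (i) I would note that $B\notin\mathcal{P}_{M}$ and $B\notin\mathcal{P}_{N}$, whence $B\notin\mathcal{P}_{M}\cup\mathcal{P}_{N}$. For condition (ii), suppose $S(x)\in\mathcal{P}_{M}\cup\mathcal{P}_{N}$ with $S(x\vee y)\subseteq S(x)$; then $S(x)$ lies in $\mathcal{P}_{M}$ or in $\mathcal{P}_{N}$, and assuming without loss of generality $S(x)\in\mathcal{P}_{M}$, condition (ii) for the primal $\mathcal{P}_{M}$ gives $S(x\vee y)\in\mathcal{P}_{M}\subseteq\mathcal{P}_{M}\cup\mathcal{P}_{N}$. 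For condition (iii), if $S(x)\cap S(y)\in\mathcal{P}_{M}\cup\mathcal{P}_{N}$, then this set lies in $\mathcal{P}_{M}$ or in $\mathcal{P}_{N}$; in the case $S(x)\cap S(y)\in\mathcal{P}_{M}$, condition (iii) for $\mathcal{P}_{M}$ yields $S(x)\in\mathcal{P}_{M}$ or $S(y)\in\mathcal{P}_{M}$, each of which places the relevant search space inside the union. The symmetric cases involving $\mathcal{P}_{N}$ are verbatim copies, so the only real device is the ``without loss of generality'' split over which of the two primals supplies the witnessing membership, and no genuine computation is needed beyond that.
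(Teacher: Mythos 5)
Your proof is correct, and on conditions (i) and (ii) it matches the paper's argument essentially word for word; the genuine divergence is in condition (iii). There you argue purely axiomatically: $S(x)\cap S(y)$ lies in $\mathcal{P}_{M}$ or in $\mathcal{P}_{N}$, and whichever of the two primals contains it yields $S(x)$ or $S(y)$ in that same primal by its own axiom (iii), hence in the union. The paper instead takes a structural detour: it rewrites $S(x)\cap S(y)=S(x\vee y)$ via Theorem 2.1, infers from $S(x\vee y)\in\mathcal{P}_{M}$ (or $\mathcal{P}_{N}$) that $x\vee y\in M$ (or $N$), applies the closure hypothesis to get $x,y\in M$ (or $N$), and concludes that \emph{both} $S(x)$ and $S(y)$ lie in the union --- a stronger conclusion than axiom (iii) demands, but one resting on a step that is not fully justified: $S(x\vee y)\in\mathcal{P}_{M}=\{\,S(z)\mid z\in M\,\}$ only says $S(x\vee y)=S(z)$ for some $z\in M$, not that the word $x\vee y$ itself belongs to $M$. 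Your route sidesteps this and in fact establishes the more general statement that the union of \emph{any} two big data primals is again a big data primal, with the hypotheses on $M$ and $N$ needed only (via Proposition 3.2) to know that $\mathcal{P}_{M}$ and $\mathcal{P}_{N}$ are primals and to identify the union with $\{\,S(x)\mid x\in M\cup N\,\}$ --- an identity the paper defers to Corollary 3.3 but which you rightly record up front to give meaning to ``relative to $M\cup N$.'' Your preliminary observation that $M\cup N$ need not inherit the closure property (take $x\in M\setminus N$ and $y\in N\setminus M$), so that Proposition 3.2 cannot simply be applied to $M\cup N$, is also correct and is a point the paper never makes explicit.
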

\begin{proof}
  \begin{enumerate}[(i)]
      \item Given that  $\mathcal{P}_{M} $ and $  \mathcal{P}_{N}$ be two big data primals in $B$. Then, $B\notin \mathcal{P}_{M} $ and $B\notin \mathcal{P}_{N}$. It implies that $B\notin\mathcal{P}_{M}\cup \mathcal{P}_{N}$.
      \item Again, let $S(x)\in \mathcal{P}_{M}\cup \mathcal{P}_{N}$ and $S(x\vee y)\subseteq S(x)$. Now $S(x)\in \mathcal{P}_{M}\cup \mathcal{P}_{N}$ implies $S(x)\in \mathcal{P}_{M}$ or $S(x)\in \mathcal{P}_{N}$. Since  $S(x\vee y)\subseteq S(x)$, so  we have $S(x\vee y)\in\mathcal{P}_{M}$ or $S(x\vee y)\in\mathcal{P}_{N}$. It implies that $S(x\vee y)\in\mathcal{P}_{M} \cup \mathcal{P}_{N}.$
  \item   Let $S(x)\cap S(y)\in \mathcal{P}_{M}\cup \mathcal{P}_{N}.$ To show that $S(x)\in \mathcal{P}_{M}\cup \mathcal{P}_{N}$ or $S(y)\in \mathcal{P}_{M}\cup \mathcal{P}_{N}.$ Now, $S(x)\cap S(y)\in \mathcal{P}_{M}\cup \mathcal{P}_{N}$ implies $S(x)\cap S(y)\in \mathcal{P}_{M}$ or $S(x)\cap S(y)\in \mathcal{P}_{N}.$ But by theorem 3.1 we have, $S(x\vee y)=S(x)\cap S(y). $ It implies that $S(x\vee y)\in  \mathcal{P}_{M}$ or $S(x\vee y)\in \mathcal{P}_{N}.$ Then we have, $x\vee y\in M$ or $x\vee y\in N$. So, $ x,y \in M$ or $x,y\in N$ and thus, $S(x)\in\mathcal{P}_{M}$ or $S(x)\in \mathcal{P}_{N}$. It gives $S(x)\in \mathcal{P}_{M}\cup \mathcal{P}_{N}.$ Similarly, we can show that $S(y)\in \mathcal{P}_{M}\cup \mathcal{P}_{N}.$ Hence, $S(x)\in \mathcal{P}_{M}\cup \mathcal{P}_{N}$ or $S(y)\in \mathcal{P}_{M}\cup \mathcal{P}_{N}.$ \par Therefore, $\mathcal{P}_{M}\cup \mathcal{P}_{N}$ is a big data primal relative to $M\cup N$ in $B$.
  \end{enumerate}

\end{proof}
\begin{corollary}
     Let $M,N\subseteq W$ such that $x,y\in M,N\iff x\vee y\in M,N $. Then, $\mathcal{P}_{M}\cup \mathcal{P}_{N}=\mathcal{P}_{M\cup N}.$
\end{corollary}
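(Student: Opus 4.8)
The plan is to prove the equality $\mathcal{P}_{M}\cup \mathcal{P}_{N}=\mathcal{P}_{M\cup N}$ directly from the defining formula $\mathcal{P}_{K}=\{\,S(x)\mid x\in K\,\}$ attached to a set of words $K$, by a standard double-inclusion argument. The underlying observation is that the assignment $K\mapsto \mathcal{P}_{K}$ is nothing but the image of $K$ under the search-space map $x\mapsto S(x)$; since the image of a union of sets is the union of their images, the corollary is exactly this image identity applied to $K=M$ and $K=N$.

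First I would verify the inclusion $\mathcal{P}_{M}\cup \mathcal{P}_{N}\subseteq \mathcal{P}_{M\cup N}$. Let $S(x)\in \mathcal{P}_{M}\cup \mathcal{P}_{N}$ be arbitrary; then $S(x)\in \mathcal{P}_{M}$ or $S(x)\in \mathcal{P}_{N}$, so that $x\in M$ or $x\in N$, i.e. $x\in M\cup N$, whence $S(x)\in \mathcal{P}_{M\cup N}$. Next I would verify the reverse inclusion $\mathcal{P}_{M\cup N}\subseteq \mathcal{P}_{M}\cup \mathcal{P}_{N}$: if $S(x)\in \mathcal{P}_{M\cup N}$ then $x\in M\cup N$, so $x\in M$ or $x\in N$, giving $S(x)\in \mathcal{P}_{M}$ or $S(x)\in \mathcal{P}_{N}$, hence $S(x)\in \mathcal{P}_{M}\cup \mathcal{P}_{N}$. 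The two inclusions together yield the claimed equality.

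There is no genuine analytic obstacle in this corollary; both inclusions are immediate consequences of the logical equivalence ``$x\in M\cup N$ if and only if $x\in M$ or $x\in N$'' combined with the definition of $\mathcal{P}_{(\cdot)}$. The one point worth flagging is the status of the hypothesis $x,y\in M,N\iff x\vee y\in M,N$: it is \emph{not} used in establishing the set equality itself, but it is precisely the closure condition that makes $\mathcal{P}_{M}$ and $\mathcal{P}_{N}$ honest big data primals (via Proposition 3.2) and makes $\mathcal{P}_{M}\cup \mathcal{P}_{N}$ a big data primal relative to $M\cup N$ (via the preceding theorem). Thus the hypothesis situates the identity inside the theory of big data primals rather than being needed for its logical content; indeed, read together with the preceding theorem, the corollary shows $\mathcal{P}_{M\cup N}$ to be a big data primal even when $M\cup N$ itself may fail the closure condition. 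I would close by remarking that the same argument extends verbatim to an arbitrary family $\{\,M_{\alpha}\mid \alpha\in \Delta\,\}$ of such word sets, yielding $\bigcup_{\alpha\in \Delta}\mathcal{P}_{M_{\alpha}}=\mathcal{P}_{\bigcup_{\alpha\in \Delta}M_{\alpha}}$, paralleling Remark 3.3.
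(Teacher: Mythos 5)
Your proof is correct, but it takes a genuinely different (and more elementary) route than the paper's. The paper does not argue by double inclusion: it first invokes the preceding theorem to assert that $\mathcal{P}_{M}\cup\mathcal{P}_{N}$ is a big data primal relative to $M\cup N$, and from this concludes directly that $\mathcal{P}_{M}\cup\mathcal{P}_{N}=\{\,S(x)\mid x\in M\cup N\,\}=\mathcal{P}_{M\cup N}$. Strictly speaking, that inference is looser than yours: being a big data primal \emph{relative to} $M\cup N$ does not by itself identify a collection as the full image $\{\,S(x)\mid x\in M\cup N\,\}$; one must still unwind the definition $\mathcal{P}_{K}=\{\,S(x)\mid x\in K\,\}$, which is exactly what your image-of-a-union argument does. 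So your version buys rigor and generality: as you correctly observe, the identity is a pure set-theoretic fact about images of unions, with the closure hypothesis $x,y\in M,N\iff x\vee y\in M,N$ serving only to place the statement inside the theory of big data primals (via Proposition 3.2 and Theorem 3.11), whereas the paper's version leans on the primal structure at the cost of a small logical gap. One wording caveat: when you pass from $S(x)\in\mathcal{P}_{M}$ to $x\in M$, you implicitly assume the search-space map is injective; since distinct words could in principle have equal search spaces, it is safer to say $S(x)=S(w)$ for some $w\in M$, which suffices for both inclusions. Your closing remark extending the identity to arbitrary families is sound and parallels Remark 3.3.
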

    \begin{proof}
        We know that $\mathcal{P}_{M\cup N}=\{\,S(x)\mid x\in M\cup N\,\}.$ Again from theorem 8.1, we have  $\mathcal{P}_{M}\cup \mathcal{P}_{N}$ is big data primal relative to $M\cup N$. Hence $\mathcal{P}_{M}\cup \mathcal{P}_{N}=\{\,S(x)\mid x\in M\cup N\,\}=\mathcal{P}_{M\cup N}.$
    \end{proof}

  {\section{cope of integration with neural networks and deep learning:}

In this section, we discuss the scope of integration of  neural networks and deep learning with our proposed methods. Our proposed topological framework (POBDS) consists mainly of two topologies, $\tau_{F}$ and $\tau_{B}$. So, for each, we can have two different neural network models. In this section, we propose a sketch idea of the $\tau_{B}$-neural network model $(W,R,S,\omega_{R},f)$. Let this neural network consist of $m-$ layers. In each layer $j(j=0,1,2,...,m)$ consists of neurons denoted as $x_{i}^{(j)}$, where $i=1,2,…,m_{j}$. In the input layer $j=0$, we may consider some input words as neurons $x_{i}^{(1)}$, which are atoms, and we consider them for a context-specific search. The next layer, for $j=1$, consists of words from $x\mathcal{R}$, and subsequently the $m$-hidden layer consists of the words from the set $x\mathcal{R}^{m}$. In the next step, we focus on calculating an activation value $y(x^{j})$ for neuron $x^{j}$ by using the formula: $$y(x^{j})=f(\sum_{x^{j-1}\in x\mathcal{R}^{j-1}}\omega_{R}(x^{j-1},x^{j})y(x^{j-1})) $$ here, $f:\mathbf{R}\rightarrow \mathbf{R} $ is an activation function, $\omega_{\mathcal{R}(x,y)} $ is the assigned weight.\\

{\section{Computational complexity:}

\noindent
 
Now, it is important to understand the computational complexity of constructing and manipulating our structures. For this purpose, we assume that $n$ is the number of big data, $m$ is the average number of words per big data, $k$ is the number of reference keywords, and $t$ is the number of iterations in the anomaly detection loop. For the computation of Jaccard Similarity per big data, conversion of lists to sets and computing intersection and union operations each take $\bigO(m + k)$ time. So, for $n$ big data, the total time to calculate Jaccard Similarity is $\bigO(n(m + k))$. On the other hand, the time for computing the average similarity is $\bigO(n)$. So, for a total of $t$ iterations, it requires $\bigO(nt)$ time. So, overall complexity, combining the above, we get $\bigO(n(m + k+t))$. Since the number of big data has been increasing day by day \cite{51}, it is obvious to consider that $m$ and $k$ are relatively small compared to $n$. According to Wynn and Eckert \cite{52}, iterations also cost money. Thus, it is expected that, for efficient big data analytics, $t$ should be relatively small in comparison to $n$. Since $t << n$ along with $m,k << n$, then the overall aforementioned computational complexity reduces to $\bigO(n)$ time.

\section{Comparison with TF-IDF and cosine similarity :}
\noindent
Methodologies, viz., TF-IDF and cosine similarity, are useful in big data analytics to many extents. But our proposed methods of searching for big data using neighborhood structures are different from TF-IDF and cosine similarity. The following table shows the comparison between traditional methodologies, viz., TF-IDF and cosine similarity: 

\begin{table}[H]
  \centering
  {\fontsize{10pt}{10pt}\selectfont
  \renewcommand{\arraystretch}{2.0} % Increases row height
  \begin{adjustbox}{max width=16cm}
  \begin{tabular}{|>{\raggedright\arraybackslash}p{8cm}|>{\raggedright\arraybackslash}p{8cm}|}
    \hline
    \textbf{TF-IDF and cosine similarity} & \textbf{Our proposed methods} \\ 
    \hline
   They recompute each term separately. \cite{32} &   Our methods reuse the search spaces.\\ 
 \hline
They rely on vector space models \cite{33} or term frequency.\cite{32} & Our methods rely on searching hierarchy. \\ 
 \hline
They struggle in context generalization or specification. \cite{32}  &  Our methods overcome this using a preorder structure.\\ 
 \hline
In big data, due to real time data generation, uses of TF-IDF and cosine similarity are computationally expensive. \cite{53} & Our methods do not depend on global document counts.  \\ 
 \hline
  \end{tabular}
  \end{adjustbox}
  }
  \caption{Comparison between TF-IDF and cosine similarity with our proposed methods.}
  \label{tab:comparison-table}
\end{table}

\section{Limitations on empirical and quantitative validations:}

\noindent

We provided several theoretical foundations related to big data search using words in the aforementioned sections. Now, it is our prime responsibility to discuss a few limitations related to empirical and quantitative validations of our theoretical foundations at present. According to IBM \cite{54}, Google \cite{55}, Amazon \cite{56}, and many others, for big data analytics, big data must be available in real time because of volume, which is one of the main characteristics of big data. Also, big data need big storage systems \cite{54,55,56} since they are usually in terabytes, petabytes, or zettabytes in terms of their volumes. Thus, available computing facilities and existing statistical analysis tools are not capable of doing big data analytics \cite{57} since these are suitable for static small data. We use the word `static' since small data are always non-real-time data as well as small in volume \cite{58}. Thus, it is not easy to have empirical or quantitative validations of many theoretical results in big data analytics due to cost, storage systems, computational facilities, and related infrastructures \cite{57,58}. Thus, many researchers have been trying to study big data by considering small data from them \cite{58}. However, since small data are limited in volume, non-heterogeneous, and structured \cite{58}, it should be our concern to consider those small data while doing big data analytics. Moreover, supercomputers are not capable of performing big data analysis \cite{59}. Thus, it is not an easy task to have empirical and quantitative validations of all the theoretical advances of big data analytics at present with the limitations mentioned above. They are only possible when companies like IBM, Amazon, Google, etc., find theoretical advances important enough to implement in their real-time big data or by making minor improvements in theoretical results. One may refer to \cite{58, 60, 61} for detailed comparisons of big data and small data, limitations of data reduction in big data, etc. But in order to overcome these limitations, many experts \cite{57, 62, 63} have been arguing to develop theoretical foundations for big data analytics that are new, novel, and independent of existing theoretical foundations of data science. Moreover, Google, Amazon, Facebook, and Twitter have been using theories to develop techniques of big data analytics as per their own requirements \cite{64}. Thus, we believe that empirical and quantitative validations of this paper may be achieved in the future, which cannot be done at present due to the aforementioned limitations. So, we are unable to show quantitative comparisons with traditional methods such as TF-IDF and cosine similarity with our proposed methods as well as make connections between big data primal structures and their applications in the real world.

\section{Discussion:} 

\noindent

This work introduces foundational topological concepts for analyzing  relationships between words in big data search. Introduction of the preordered big data system (P.O.B.D.S) addresses a structural representation of data points in big data, where theorem 4.1 concludes the idea that search spaces are inherently hierarchical. By introducing forneighborhoods and afterneighborhoods of words,  their associated
topologies $\tau_F$ and $\tau_B$, this paper provides some significant tools in search
optimization that enable hierarchical refinement of search results, where broader terms are considered for more general context, and specific terms are for narrowing down to context-specific searches. The duality of these topologies can be seen in theorem 5.1, which reflects users’
behaviors  because users often start with specific queries and generalize when
results are insufficient, and alternatively, they start broadly first and then refine when looking for specific information. In this work, a layered exploration of search spaces can be seen in definition 5.4, theorem 5.5, and theorem 5.6, where how concepts are evolved as search progresses through increasingly broader or narrower contexts is mentioned. Also, by implementing $m$-steps neighborhoods in definition 5.3, it allows search methods to compute layered
contexts, offering refined or expanded suggestions dynamically. Again,
preservation of the hierarchical structure of search results and consistency across multi-steps searching can be seen in theorem 5.4, theorem 5.3, and their application can be seen in robust multi-step searching algorithms where results derived over multiple iterations retain logical consistency. By introducing ‘Big Data Primal,’
this paper offers a mathematical approach to simplify and organize the searching processes in big data. Proposition 8.1 ensures that for any set of words $M$, the primal structure can be localized to subsets of interest, ensuring that the
framework is adaptable to specific contexts. Applicability of this proposition can be seen in domain-specific searching processes, which enable search methods to specialize in specific subsets of words or topics, such as ‘medical data’ or `financial data’, using big data primal structures. Also, it can be applicable in context-aware recommendations that support personalized
recommendations by focusing on specific sets of user-relevant terms and their relationships. Again, the importance of theorem 8.1 may be seen in combination of knowledge domains, this result demonstrates that primal structures from different domains or topics can be combined seamlessly. These are some of the findings, which enhance big data search methodologies and offer
a flexible framework that can accommodate varying levels of detail and relevance in search results. We provide a concrete example of proposed neighborhood structures along with a pseudo-algorithm to construct them. Also, an example describing the $m$-steps relation is added. A detailed comparison of existing methods for anomaly detection with our proposed $\delta$-method is provided. Again, a scope of integration of our proposed P.O.B.D.S. frameworks with neural networks and deep learning is given. We also calculated the computational complexity of our methods. Further, a detailed comparison of TF-IDF and cosine similarity with our proposed neighborhood structures is included. Moreover, limitations of empirical and quantitative validations are added.   \\

\section{Conclusion:}
In this paper, we investigate hidden topological features in big data analytics that traditional topological data analysis (TDA) cannot study. We establish a preorder relation on the set of words in big data, identifying that the big data search systems operate as preorder big data systems. Utilizing this relation, we introduce new concepts and results related to the forneighborhoods and afterneighborhoods of words within big data. Furthermore, we propose an \textit{m}-steps relation for big data search, which helps us to derive novel topological insights into big data search. Additionally, we introduce the data-directed graph (DDG) and examine some of its properties. This innovative concept opens the way for new discussions on the topological features of big data. We also present a method, named the $\delta$-method,  for anomaly detection in big data search using the Jaccard similarity coefficient.\\

Inspired by the concept of primal, defined by Acharjee et al. \cite{39}, we introduce a generalized version, termed as the big data primal, and explore its properties from the perspective of big data analytics. This big data primal is used to study proximity \cite{65} in big data. Finally, Isham \cite{66,67} established connections between quantum mechanics, lattice theory, and general topology. Given our paper's numerous links to general topology and lattice theory, it is anticipated that our results can be beneficial for future studies on big data from the perspectives of quantum mechanics and quantum computing. Moreover, it is well known that complex systems can be studied using statistical physics, and big data can be generated by these systems \cite{68}. Therefore, our paper may be valuable to experts in different areas.\\

Our proposed topological and relational framework preserves its logical and structural correctness as it is extended to large data sets. Here, the neighborhood structures are based on this preorder structure of words, or more specifically, hierarchical searching. This proposed method never uses any similarity function in the formation of neighborhood. The idea behind locality-sensitive hashing (LSH) \cite{69,70} needs concepts of fuzzy similarity functions and probability \cite{71,72,73}. But our proposed neighborhood structures do not rely on similarity functions and probability. Since our methods have the potential to deal with big data, that is, data in real time, the scalability issue of our methods is dependent on the infrastructures and requirements of the companies, viz., IBM \cite{54, 74}, Amazon \cite{56}, Google \cite{55}, and many others. Thus, our paper may find uses in the future. 

\vskip 0.5cm
\noindent
{\bf Acknowledgment:} The authors are thankful to Prof. Noam Chomsky.

{}
\end{document}